\theoremstyle{definition}
\newtheorem{example}{Example}[section]
\newtheorem{definition}{Definition}[section]
\newtheorem{remark}{Remark}[section]
\theoremstyle{plain}
\newtheorem{theorem}{Theorem}[section]
\newtheorem{lemma}[theorem]{Lemma}
\newtheorem{proposition}[theorem]{Proposition}
\newtheorem{corollary}[theorem]{Corollary}
\newenvironment{keywords}{%
\begin{changemargin}{1cm}{1cm}
\noindent{\bf Keywords:}
}
{\end{changemargin} }
\newenvironment{changemargin}[2]{%
\begin{list}{}{%
\setlength{\topsep}{0pt}%
\setlength{\leftmargin}{#1}%
\setlength{\rightmargin}{#2}%
\setlength{\listparindent}{\parindent}%
\setlength{\itemindent}{\parindent}%
\setlength{\parsep}{\parskip}%
}%
\item[]}{\end{list}}
\newcommand{\vol}{\mathrm{vol}}
\newcommand{\diag}{\textrm{diag}}
\newcommand{\var}{\textrm{var}}
\def\ci{\perp\!\!\!\perp}
\begin{document}

\title{Faithfulness and learning hypergraphs from discrete distributions}

\author{Anna Klimova \\
{\small{Institute of Science and Technology, Austria} }\\
{\small \texttt{aklimova@ist.ac.at}}\\
{}\\
\and
Caroline Uhler \\ 
{\small{Institute of Science and Technology, Austria} }\\
{\small \texttt{caroline.uhler@ist.ac.at}}\\
{}\\
\and Tam\'{a}s Rudas \\
{\small{E\"{o}tv\"{o}s Lor\'{a}nd University, Budapest, Hungary}}\\
{\small \texttt{rudas@tarki.hu}}\\
}

\date{}

\maketitle

\begin{abstract}
In this paper, we study the concepts of faithfulness and strong-faithfulness for discrete distributions. In the discrete setting, graphs are not sufficient for describing the association structure. So we consider hypergraphs instead, and introduce the concept of parametric (strong-) faithfulness with respect to a hypergraph. Assuming strong-faithfulness, we build uniformly consistent parameter estimators and corresponding procedures for a hypergraph search. The strength of association in a discrete distribution can be quantified with various measures, leading  to different concepts of strong-faithfulness. We explore these by computing lower and upper bounds for the proportions of distributions that do not satisfy strong-faithfulness.  
\end{abstract}

\begin{keywords}
contingency tables, directed acyclic graphs, hierarchical log-linear models, hypergraphs, (strong-) faithfulness
\end{keywords}

\section{Introduction}\label{intro}

A graphical model is a set of probability distributions whose association structure can be identified with a graph.  Given a  graph, the Markov property entails a set of conditional independence relations that are fulfilled by distributions in the model. Distributions in the model that obey no further conditional independence relations are called \emph{faithful to the graph}. For each undirected graphical model, as well as for each directed acyclic graph (DAG) model, there is a distribution that is faithful to the graph \citep*[cf.][]{SpirtesBook}. Moreover, the Lebesgue measure of the set of parameters corresponding to distributions that are unfaithful to a graphical model is zero; this result was proven by \cite*{SpirtesBook} for the case of multivariate normal distributions, by \cite{MeekFaith} for discrete distributions on multi-way contingency tables, and by \cite*{Pena2009} for arbitrary sample spaces and dominating measures. It is also well-known, that a DAG model may include distributions that are unfaithful to it but are not Markov to any nested DAG. This kind of unfaithfulness may occur due to path cancellation and can arise both in the discrete and in the multivariate normal settings \citep[cf.][]{ZhangSp2008, UhlerRaskutti2013}. 

In the discrete case, the non-existence of a graph to which a distribution is faithful is related to the presence of higher than first order interactions in this distribution. Graph learning algorithms \citep[cf.][]{SpirtesBook}, which do not recognize the presence of higher order interactions, may produce a graph which does not reveal the true association structure \citep[cf.][]{StudenyBook}. In order to avoid such errors, graph learning algorithms usually assume the existence of a DAG to which the distribution is faithful. Since the Lebesgue measure of the set of parameters corresponding to the distributions that are unfaithful to the underlying graph is zero, the faithfulness assumption is not considered to be restrictive in the context of graphical search. While graph search procedures assuming faithfulness are pointwise consistent, they are not uniformly consistent and thus cannot simultaneously control Type I and Type II errors with a finite sample size  \citep*{RSSWassUnifCons}. To ensure existence of a uniformly consistent learning procedure, strong-faithfulness of a distribution to the underlying DAG is needed  \citep{ZhangSpirtesLambdaFaith}. \cite*{UhlerFaithGeometry}  analyzed the Gaussian setting and showed that the strong-faithfulness assumption may, in fact, be very restrictive and the corresponding proportions of distributions which do not satisfy strong-faithfulness may become very large as the number of nodes grows. 
 
The concepts of faithfulness and strong-faithfulness were originally introduced in the causal search framework, where they are linked to identifiability of causal effects.  However, as we show in this paper using the discrete setting, these concepts are also important for identifiability of more general parameters of association. In  Section~\ref{sectionGraphFaith}, we define the concept of a model class being closed under a faithfulness relation: for each positive distribution, there exists a model in such a class to which it is faithful. By giving examples of distributions that are not faithful to any  directed or undirected graphical model, we show that these model classes are not closed under the faithfulness relation which is based on the corresponding Markov property.  Further, we introduce the concept of parametric faithfulness of a distribution to a hypergraph (instead of a graph). This concept seems more adequate for categorical data, where hypergraphs can be used to represent hierarchical log-linear models. Indeed, we show that the class of models associated with hypergraphs is closed under a parametric faithfulness relation. 

In Section~\ref{sectionStrongFaith}, we describe two major difficulties with the concept of strong-faithfulness in the discrete case. First, in contrast to role of correlations in the multivariate normal case, there is no single standard measure of the strength of association in  a joint distribution. Therefore, depending on the measure of association, different variants of strong-faithfulness may be considered. Second, the proportion of strong-faithful distributions depends on the parameterization used and can only be computed if the parameter space has finite volume. We explore the consequences of different parameterizations and measures of association for the case of the $2 \times 2$ contingency table. We define parametric strong-faithfulness with respect to a hypergraph  under a parameterization based on the log-linear interaction parameters. Assuming strong-faithfulness, we show that the maximum likelihood estimators of the interaction parameters associated with the hyperedges are uniformly consistent. As a result, we give a set of conditions, under which Type I and Type II errors can be controlled with a finite   sample size. We also discuss the uniform consistency of model selection procedures for a hypergraph search, for example, using the approaches described by \cite{EdwardsBook, EdwardsNote}.

In Section~\ref{SecProp}, we estimate the proportion of distributions that do not satisfy the parametric strong-faithfulness assumption with respect to a given hypergraph. We give an exact formulation of these proportions, under a parameterization based on conditional probabilities, for hypergraphs whose hyperedges form a decomposable set. The association structure of such distributions  may be discovered incorrectly during a hypergraph learning procedure. Finally, we define the concept of projected strong-faithfulness, which applies to distributions which do not belong to the hypergraph, and estimate the proportions of projected strong-faithful distributions for several hypergraphs for the $2 \times 2 \times 2$ contingency table. 

In Section \ref{SecConcl}, we conclude the paper with a brief discussion of our results and their implications.

\section{Graphical and parametric faithfulness} \label{sectionGraphFaith}

In this section, we first review the concept of faithfulness with respect to a graph. We then introduce parametric faithfulness with respect to a hypergraph and show that this is a more relevant concept for categorical data.

\subsection{Faithfulness with respect to a graph} 

Let $\mathcal{V}_1, \dots, \mathcal{V}_K$ be random variables taking values in $\mathcal{I} = \mathcal{I}_1 \times \dots \times \mathcal{I}_K $, a Cartesian product of finite sets. $\mathcal{I}$ describes a $K$-way contingency table and a vector $\boldsymbol i = (i_1,\dots,i_K) \in \mathcal{I}$ forms a cell.  A subset $M \subseteq \{1,\dots ,K\}$ specifies a marginal of the joint distribution of $\mathcal{V}_1, \dots, \mathcal{V}_K$, and  $M=\emptyset$ is the empty marginal. For $M=(k_1,\dots k_t)$, the set $\mathcal{I}_M = \mathcal{I}_{k_1} \times \dots \times \mathcal{I}_{k_t}$ is a \emph{marginal table}, and  the canonical projection $\boldsymbol i_M$ of the cell $\boldsymbol i$ onto the set $\mathcal{I}_M$ is a \emph{marginal cell}. We parameterize the population distribution by cell probabilities $\boldsymbol p =(p_{\boldsymbol i})_{\boldsymbol i\in \mathcal{I}}$, where $p_{\boldsymbol i} \in (0,1)$ and $\sum_{\boldsymbol i \in \mathcal{I} }p_{\boldsymbol i} = 1$, and denote by $\mathcal{P}$ the set  of all distributions on $\mathcal{I}$. A subset of $\mathcal{P}$ is called a \emph{model}. For simplicity of exposition, we assume that $\mathcal{V}_1, \dots, \mathcal{V}_K$ are binary, $\mathcal{I}$ is treated as a sequence of cells ordered lexicographically, and 
a distribution $P\in \mathcal{P}$ is addressed  by its parameter, $\boldsymbol p$. 

A \emph{graphical model} is a set of probability distributions, whose association structure can be identified with a graph with vertices $V = \{1, \dots, K\}$, where each vertex $i$ is associated with a random variable $\mathcal{V}_i$. In the following, we will identify each vertex with its associated random variable. The absence of an edge between two vertices means that the corresponding random variables satisfy some (conditional) independence relation. A detailed description of graphical models for discrete as well as for multivariate normal distributions can be found in \cite{EdwardsBook}, among others. In the sequel, we only consider undirected graphical models and DAG models.

A graphical model identified with an undirected graph (also called a \emph{graphical log-linear model} in the discrete setting) is a set of probability distributions on $V$ that satisfy the \emph{local undirected Markov property}: Every node is conditionally independent of its non-neighbors given its neighbors. In the discrete case, such models are a sublcass of hierarchical log-linear models. A graphical model identified with a directed acyclic graph, a DAG model, is a set of probability distributions on $V$ that satisfy the \emph{directed Markov property}: Every node is conditionally independent of its non-descendants given its parents. A distribution that satisfies the Markov property with respect to a graph is called \emph{Markov} to it. 

A distribution which is Markov to a graph, is said to be \emph{faithful} to it if all conditional independencies in this distribution can be derived from the graph. The faithfulness relation can be thought of as a decision rule that classifies a distribution $\boldsymbol p$ in a model $\mathcal{M}$  as faithful or unfaithful to it:
$$\mathbb{F}(\boldsymbol p, \mathcal{M}) = \left\{\begin{array}{ll} 1 & \mbox{ if } \boldsymbol p \mbox{ is faithful to } \,  \mathcal{M},\\
0 & \mbox{ otherwise}.\\
 \end{array}\right.$$

\begin{definition} 
A class $\frak{C}$ of models on $\mathcal{P}$, where $\frak{C}$ is partially ordered with respect to inclusion, is said to be \emph{closed} under the faithfulness relation indicated by $\mathbb{F}$, if for every non-empty $\mathcal{M} \in \frak{C}$ and for every $\boldsymbol p \in \mathcal{M}$ such that $\mathbb{F}(\boldsymbol p , \mathcal{M}) = 0$, there exists an $\mathcal{M}' \in \frak{C}$ with $\mathcal{M}' \subset \mathcal{M}$ and  $\mathbb{F}(\boldsymbol p , \mathcal{M}') = 1$.
\end{definition}

This definition implies that a class $\frak{C}$ is closed under the faithfulness relation indicated by $\mathbb{F}$ if and only if for every $\boldsymbol p \in \mathcal{P}$ there exists an $\mathcal{M} \in \frak{C}$, such that $\mathbb{F}(\boldsymbol p, \mathcal{M}) = 1$. Graphical log-linear models and DAG models are specified by a list of conditional independence relations which, in turn, comprise other conditional independencies. Thus, these model classes have a natural partial order implied by  the conditional independence relation. We now show that these classes are not closed under the corresponding faithfulness relations. 

\begin{proposition}
\label{prop_undirected}
The class of graphical log-linear models is not closed under the faithfulness relation defined by the local undirected Markov property.
\end{proposition}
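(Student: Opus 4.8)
By the equivalence noted immediately after the definition of a closed class, it is enough to exhibit one distribution $\boldsymbol p\in\mathcal P$ that is faithful to no graphical log-linear model; I will do this already on the smallest nontrivial table, the $2\times2\times2$ table ($K=3$). The candidate is the distribution whose main effects and two-factor interactions all vanish but whose three-factor interaction does not, namely
\[
p_{ijk}=\tfrac18\bigl(1+\lambda(-1)^{i+j+k}\bigr),\qquad (i,j,k)\in\{0,1\}^3,\ 0<|\lambda|<1,
\]
which is strictly positive, so it genuinely lies in $\mathcal P$.

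First I would record which conditional independence statements hold for $\boldsymbol p$. Summing out any single coordinate annihilates the $\lambda$-term, so every one- and two-way margin is uniform; hence $\mathcal V_1\ci\mathcal V_2$, $\mathcal V_1\ci\mathcal V_3$, and $\mathcal V_2\ci\mathcal V_3$ all hold. Conversely, conditioning on one variable leaves the $\lambda$-term untouched, so for $\lambda\neq0$ none of the relations $\mathcal V_i\ci\mathcal V_k\mid\mathcal V_j$ holds, none of the joint relations $\mathcal V_i\ci\{\mathcal V_j,\mathcal V_k\}$ holds, and $\mathcal V_1,\mathcal V_2,\mathcal V_3$ are not mutually independent. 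These are all one-line computations.

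Next I would compare this structure against the eight undirected graphs on $\{1,2,3\}$ and the separation statements each one entails: the empty graph forces mutual independence; each single-edge graph forces some $\mathcal V_i\ci\{\mathcal V_j,\mathcal V_k\}$; each two-edge (path) graph forces some $\mathcal V_i\ci\mathcal V_k\mid\mathcal V_j$; and only the triangle forces nothing. By the previous paragraph, $\boldsymbol p$ violates the constraints of all seven proper subgraphs, so the unique graphical log-linear model containing $\boldsymbol p$ (a necessary condition for faithfulness is being Markov) is the saturated one associated with the complete graph. But that model entails no conditional independence at all, whereas $\boldsymbol p$ does satisfy $\mathcal V_1\ci\mathcal V_2$; hence $\mathbb F(\boldsymbol p,\cdot)=0$ for it as well. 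Therefore $\boldsymbol p$ is faithful to no graphical log-linear model, and by the stated equivalence the class is not closed under the faithfulness relation of the local undirected Markov property.

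The substantive point, and the step that must be argued with care rather than waved through, is the middle one: an undirected graph cannot simultaneously encode the pairwise marginal independence of three variables and a genuine three-way interaction among them, because deleting an edge imposes a \emph{conditional} independence, which the nonzero highest-order interaction destroys. The same example, padded with $K-3$ further independent uniform binary variables (forcing those vertices to be isolated), yields the statement for every $K\ge3$.
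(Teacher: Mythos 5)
Your proof is correct, but it takes a different route from the one the paper uses for this proposition. The paper's proof is a four-variable example: a distribution in the model $[ABC][ABD]$ (so $C\ci D\mid A,B$, Markov to the complete graph minus the edge $CD$) whose $(A,B)$-marginal is uniform, so that $A\ci B$ holds and faithfulness fails, while the non-vanishing conditional odds ratios of $ABC$ given $D$ and of $ABD$ given $C$ rule out every nested undirected graph. You instead use the minimal three-variable ``parity'' distribution with uniform margins, all three pairwise independencies, and a non-zero three-factor interaction; you then check all eight undirected graphs on three vertices directly. Interestingly, your distribution is exactly the one the paper introduces later (as $p_{ijk}=1/8\pm\delta$ in its Example 2.3), but the paper deploys it only against DAG models, not in the proof of this proposition. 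Both arguments exploit the same phenomenon — a non-vanishing highest-order interaction forces the Markov graph to be complete on those vertices, while a marginal independence demands a missing edge — and you correctly flag this as the step needing care. Your version is more economical (three binary variables, one-line marginal computations, an exhaustive check of a small finite list of graphs); the paper's version additionally exhibits unfaithfulness arising inside a \emph{non-saturated} graphical model, which is slightly more informative but not needed for the proposition. Your closing remark on padding with independent uniform variables correctly extends the counterexample to every $K\ge 3$, though a single $K$ already suffices for the statement as given.
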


The following example is given as a proof. 

\begin{example}\label{FourVarUnfaith}
Let $V=\{A, B, C, D\}$ and consider the log-linear model $[ABC][ABD]$ \citep[cf.][]{Agresti2002}. This is the model of conditional independence of $C$ and $D$ given $A$ and $B$. All distributions in this model are Markov to the graph in Figure \ref{CiDgivABgraph}. Consider the distribution parameterized by  
\begin{eqnarray*}
\boldsymbol p &=& (0.022, 0.062, 0.063, 0.103, 0.103, 0.063, 0.062, 0.022, \\ 
&&0.103, 0.063, 0.062, 0.022,0.022,0.062,0.063,0.103)',
\end{eqnarray*}
where the cell probabilities are ordered lexicographically. In this distribution, the conditional odds ratios ($\mathcal{COR}$) of $C$ and $D$ given the levels of $A$ and $B$ are equal to $1$:
\begin{equation*}
\mathcal{COR}(CD\mid A=i, B=j) = \frac{p_{ij00}p_{ij11}}{p_{ij01}p_{ij10}} = 1, \,\, \mbox{for all } i, j \in \{0, 1\}.
\end{equation*}
Hence, the distribution is in the model. The $(A,B)$-marginal of this distribution is uniform:
$$
\begin{array}{c|cc}
 & {B}=0 & {B}=1 \\
\hline
{A} = 0& 1/4 & 1/4\\
{A} = 1& 1/4 & 1/4\\
\end{array},
$$
and thus $A \ci B$. So the distribution is unfaithful to the graph in Figure \ref{CiDgivABgraph}. In addition, since the conditional odds ratios of $A$, $B$ and $C$ given $D$ and of $A$, $B$, and $D$ given $C$ are not equal to $1$:
\begin{eqnarray*}
\mathcal{COR}(ABC\mid D=0) &=& \frac{p_{0000}p_{1100}p_{0110}p_{1010}}{p_{0100}p_{1000}p_{0010}p_{1110}} \approx  0.04418483,\\
\mathcal{COR}(ABC\mid D=1) &=& \frac{p_{0001}p_{1101}p_{0111}p_{1011}}{p_{0101}p_{1001}p_{0011}p_{1111}} \approx  0.04418483,\\
\mathcal{COR}(ABD\mid C=0) &=& \frac{p_{0000}p_{1100}p_{0101}p_{1001}}{p_{0100}p_{1000}p_{0001}p_{1101}} \approx 0.04710518,\\
\mathcal{COR}(ABD\mid C=1) &=& \frac{p_{0010}p_{1110}p_{0111}p_{1011}}{p_{0110}p_{1010}p_{0011}p_{1111}} \approx 0.04710518,
\end{eqnarray*}
the distribution cannot be Markov to any nested undirected graph. 
\qed
\end{example}

The situation described in Example \ref{FourVarUnfaith} is distinctive to discrete distributions. In the Gaussian setting, marginal independence of more than two variables implies their joint independence. Thus, a multivariate normal distribution whose components are pairwise independent is Markov and faithful to a graph with no edges. But in the discrete case, a joint distribution of pairwise independent variables may have a non-trivial structure of higher than first order interactions.  Next, we prove that also the class of DAG models is not closed with respect to the faithfulness relation.

\begin{proposition}
The class of  DAG models is not closed under the faithfulness relation defined by the directed Markov property.
\end{proposition}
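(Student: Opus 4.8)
Following the pattern of Proposition~\ref{prop_undirected}, I would prove this by exhibiting a single strictly positive distribution on three binary variables that is faithful to no DAG. The idea is to realize the phenomenon just described in the text: pairwise marginal independence coexisting with a nontrivial higher-order interaction. Let $V=\{A,B,C\}$ and take $\boldsymbol p$ defined by
$$p_{abc}=\tfrac{1}{8}\bigl(1+(-1)^{a+b+c}\,t\bigr),\qquad (a,b,c)\in\{0,1\}^3,$$
for a fixed $t$ with $0<|t|<1$; for concreteness one may set $t=1/2$, giving the lexicographically ordered vector $\tfrac{1}{16}(3,1,1,3,1,3,3,1)'$. Equivalently, this is the distribution whose one- and two-factor log-linear interactions all vanish while the three-factor interaction does not, so $\boldsymbol p\in(0,1)^8$. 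A routine check shows that each of the three two-way marginals is uniform, so $A\ci B$, $A\ci C$ and $B\ci C$; while $\mathcal{COR}(AB\mid C=c)=p_{00c}p_{11c}/(p_{01c}p_{10c})=\bigl((1+t)/(1-t)\bigr)^2\ne 1$, so $A$ and $B$ are not conditionally independent given $C$, and by symmetry no conditional independence holds in $\boldsymbol p$; in particular $\boldsymbol p$ is not mutually independent.

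Suppose now, for contradiction, that $\boldsymbol p$ is faithful to some DAG $G$ on $V$. Then $\boldsymbol p$ is Markov to $G$ and every conditional independence holding in $\boldsymbol p$ is derivable from $G$; in particular $G$ must entail each of the marginal independences $A\ci B$, $A\ci C$, $B\ci C$. Since a single edge cannot be d-separated by any conditioning set, this forces $A$, $B$ and $C$ to be pairwise non-adjacent in $G$, i.e.\ $G$ is the empty DAG. But a distribution Markov to the empty DAG is mutually independent, contradicting the preceding paragraph. Hence $\boldsymbol p$ is faithful to no DAG model. Taking $\mathcal{M}$ to be the saturated model --- equivalently, the model of the complete DAG on $V$, which contains $\boldsymbol p$ --- we have $\mathbb{F}(\boldsymbol p,\mathcal{M})=0$ and there is no DAG model $\mathcal{M}'\subset\mathcal{M}$ with $\mathbb{F}(\boldsymbol p,\mathcal{M}')=1$, so by the equivalent characterization of closedness noted after the definition of a closed model class, the class of DAG models is not closed under the faithfulness relation defined by the directed Markov property.

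There is no serious obstacle here; the two points needing care are checking that $\boldsymbol p$ is strictly positive (so that it is an admissible distribution in our setup) and making sure the argument rules out $G$ being the \emph{empty} DAG --- which is precisely what the three marginal independences, taken together, accomplish. The verification of the (in)dependence structure of $\boldsymbol p$ is an elementary computation that I would not spell out in full.
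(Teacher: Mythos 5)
Your proof is correct and takes essentially the same route as the paper: your distribution $p_{abc}=\tfrac18(1+(-1)^{a+b+c}t)$ is exactly the paper's Example~\ref{ExampleIntro} (with $t=-8\delta$), i.e.\ pairwise independence with a non-vanishing three-way interaction, and the paper likewise concludes that no DAG on $\{A,B,C\}$ is faithful to it. The only difference is that you spell out the d-separation argument (forcing the empty skeleton, hence mutual independence) that the paper leaves to its Table of three-node graphs, and you omit the paper's additional four-variable path-cancellation example, which is not needed for the proposition.
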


As a proof, two examples are given. The second example only pertains to the discrete case.

\begin{figure}[t]
\begin{minipage}[b]{0.45\linewidth}
\centering
\includegraphics[scale=0.7]{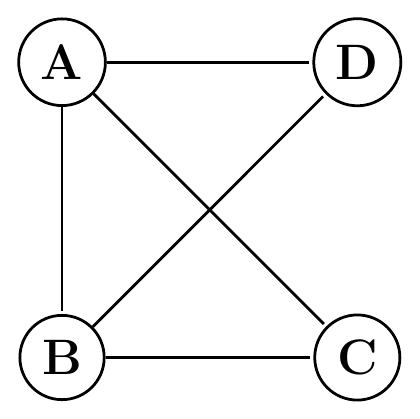}
\caption{$C \ci D \mid A,B$.}
\label{CiDgivABgraph}
\end{minipage}
\hspace{1.5cm}
\begin{minipage}[b]{0.45\linewidth}
\centering
\includegraphics[scale=0.7]{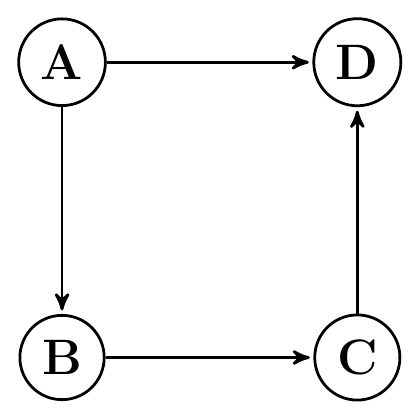}
\caption{$A \ci C \mid B, \quad B \ci D \mid A,C$.}
\label{DAGcycle}
\end{minipage}
\end{figure}

\begin{example}\label{normalDAGunfaith}
Let $V=\{A, B, C, D\}$ and consider the model specified by two conditional independence relations: $A \ci C \mid B$ and $B \ci D \mid A,C$.  Any distribution in this model is Markov to the DAG in Figure  \ref{DAGcycle}. For example, the distribution parameterized by 
\begin{eqnarray*}
\boldsymbol p &=& (0.006, 0.006, 0.0288, 0.0192, 0.06, 0.06, 0.072, 0.048, 0.0056, \\
&&0.0504, 0.187148, 0.0368516, 0.021, 0.189, 0.175452, 0.0345484)',
\end{eqnarray*}
is in the model. However, this distribution also satisfies the additional independence relation $A \ci D$.  This independence relation is not reflected in the graph. Thus the distribution is unfaithful to the graph in Figure \ref{DAGcycle}. 

Next, we show that there is no DAG that fulfills all three (conditional) independence relations $A \ci D$, $A \ci C\mid B$, $B \ci D\mid A,C$. If such a DAG existed, then its skeleton would have three edges: $AB$, $BC$, $CD$. In order to satisfy faithfulness, $A \ci D$ requires that $A\to B\leftarrow C$ or $B\to C\leftarrow D$. However, $A \ci C\mid B$ is unfaithful to $A\to B\leftarrow C$ and $B \ci D\mid A,C$ is unfaithful to $B\to C\leftarrow D$.
 \qed
\end{example}

\begin{remark} 
\label{rem_Gaussian}
One can also construct an instance of Example \ref{normalDAGunfaith} using multivariate normal distributions by choosing the partial correlations in such a way that the causal effect associated with the edge $A\to D$ cancels with the causal effect associated with the path $A\to B\to C\to D$ (see Figure \ref{DAGcycle}). This shows that also Gaussian  DAG models are not closed under the faithfulness relation defined by the directed Markov property.
\end{remark}

The next example illustrates a situation that occurs only in the discrete case. To construct this example, we will use the fact that, in contrary to the Gaussian case, a discrete distribution with pairwise independent random variables can have non-vanishing interactions of higher than the first order.

\begin{example}\label{ExampleIntro}
Let $V=\{A, B, C\}$.  Consider the distribution  parameterized by 
\begin{equation}\label{distr18}
\boldsymbol p = (1/8-\delta,1/8+\delta, 1/8+\delta,1/8-\delta, 1/8+\delta,1/8-\delta, 1/8-\delta,1/8+\delta)',
\end{equation}
where  $\delta \in (-1/8, 1/8)$. Its marginals are uniform resulting in pairwise independence: $A \ci B$, $A \ci C$, and $B \ci C$. The second order odds ratio of this distribution,
\begin{equation*}
\frac{p_{000}p_{011}p_{101}p_{110}}{p_{001}p_{010}p_{100}p_{111}} = \left(\frac{1/8-\delta}{1/8+\delta}\right)^{4},
\end{equation*} 
does not vanish, implying that $A$, $B$, and $C$ are not jointly independent. The  distribution belongs to the graphical log-linear model that can be identified with the graph shown in row 1 of Table \ref{allgraphsABC}. Further, since each pairwise independence holds, the distribution is Markov to the DAGs shown in rows 2, 3, and 4 of Table \ref{allgraphsABC}. However, the distribution is not faithful to these DAGs and it is not Markov to any of the nested DAGs (rows 5, 6, 7 and 8). \qed 
\end{example}

The association structure of a distribution that is unfaithful to every model in a given class can be considered within a larger model class. We have described examples of discrete distributions for which there is no undirected graphical model or DAG model to which they are faithful. Graphical models (directed and undirected) for discrete distributions are a subclass of hierarchical marginal log-linear models \citep*{RudasBergsma, RudasBN2006} and can be considered within this larger class. We revisit Example \ref{normalDAGunfaith} to motivate the introduction of \emph{parametric faithfulness}, a generalization of the concept of faithfulness that can be applied to the class of hierarchical marginal log-linear models. In the following, we show that under this natural generalization of faithfulness, we can find a model in the class of hierarchical marginal log-linear models to which the distribution described in Example \ref{normalDAGunfaith} is faithful.

\vspace{5mm}

\textbf{Example \ref{normalDAGunfaith}} (revisited):
A marginal log-linear parameterization  \citep{RudasBergsma} for the DAG in Figure~\ref{DAGcycle} can be derived from the set of marginals $$\mathcal{M} = \{(A, D), (A, B, C), (A, B, C, D)\}.$$ The corresponding parameters are:
\begin{eqnarray}\label{MargEx}
\lambda_{\emptyset}^{AD}, \, \,  \lambda_{A*}^{AD}, \, \,  \lambda_{*D}^{AD}, \, \,  \lambda_{AD}^{AD},  \, \,  \lambda_{*B*}^{ABC},  \, \,  \lambda_{**C}^{ABC},  \, \,  \lambda_{*BC}^{ABC},  \, \, \lambda_{AB*}^{ABC}, \, \, \lambda_{A*C}^{ABC},  \, \, \lambda_{ABC}^{ABC},  \nonumber \\
\\
\lambda_{*B*D}^{ABCD},  \, \, \lambda_{**CD}^{ABCD},  \, \,  \lambda_{AB*D}^{ABCD},  \, \,  \lambda_{A*CD}^{ABCD},  \,  \lambda_{*BCD}^{ABCD},  \,  \, \lambda_{ABCD}^{ABCD}. \nonumber
\end{eqnarray}
The conditional independencies $A \ci C \mid B$ and $B \ci D \mid A,C$ are obtained by taking
\begin{eqnarray}\label{MargEx2}
\lambda_{A*C}^{ABC} = 0, \,\, \lambda_{ABC}^{ABC} = 0, \,\, \lambda_{*B*D}^{ABCD} = 0, \,\,  \lambda_{AB*D}^{ABCD} = 0, \,\, \lambda_{*BCD}^{ABCD} = 0, \,\, \lambda_{ABCD}^{ABCD} = 0.
\end{eqnarray}
Any distribution that is Markov to the DAG in Figure  \ref{DAGcycle} can be parameterized by the remaining marginal log-linear parameters. 
The faithfulness relation in the class of marginal log-linear models can be defined as a relationship between the parameters of a distribution and the parameters of a model that contains the distribution. A distribution which also satisfies the marginal independence $A \ci D$, has $\lambda_{AD}^{AD} = 0$ and thus belongs to a nested marginal log-linear model, to which it is faithful in the parametric sense. \qed

\vspace{5mm}

This example motivates taking a parametric approach (instead of a graphical approach) to faithfulness. In the next section we introduce the concept of parametric faithfulness for discrete distributions more formally.

\subsection{Parametric faithfulness}
\label{sec_par_faith}

Let $\mathcal{P}$ denote the full exponential family of distributions. We choose a mixed parameterization $(\boldsymbol \mu, \boldsymbol \nu)$ of this family, where $\boldsymbol \mu$ denotes the vector of mean value parameters and $\boldsymbol \nu$ the vector of canonical parameters \citep[cf.][]{Barndorff1978}. Let $\frak{C}$ be a class of partially ordered exponential families  that are obtained by setting some of the components of $\boldsymbol \mu$ and/or some of the components of $\boldsymbol \nu$ to zero, and let $\mathcal{M} \in \frak{C}$. Assume that $\mathcal{M}$ is parameterized by $(\boldsymbol \mu_{\mathcal{M}}, \boldsymbol \nu_{\mathcal{M}})$, where $\boldsymbol \mu_{\mathcal{M}} \subseteq \boldsymbol \mu$, $\boldsymbol \nu_{\mathcal{M}} \subseteq \boldsymbol \nu$,  $\boldsymbol \mu \setminus \boldsymbol \mu_{\mathcal{M}} = \boldsymbol 0$, and $\boldsymbol \nu \setminus \boldsymbol \nu_{\mathcal{M}} = \boldsymbol 0$. We define faithfulness as a relationship between the parameters of a distribution and the parameters of a model containing the distribution under consideration.

\begin{definition} \label{FaithNormalDef}
A distribution $\boldsymbol p \in \mathcal{M}$ parameterized by $(\boldsymbol \mu_{\mathcal{M}}(\boldsymbol p), \boldsymbol \nu_{\mathcal{M}}(\boldsymbol p))$, satisfies the \emph{parametric faithfulness relation with respect to} $\mathcal{M}$ if none of the components of $\boldsymbol \mu_{\mathcal{M}}(\boldsymbol p)$ or  $\boldsymbol \nu_{\mathcal{M}}(\boldsymbol p)$ vanish. 
\end{definition}

The class of discrete exponential families, where the canonical parameters are the interactions of the variables in $V$ of order up to $K-1$, corresponds to the class of hierarchical log-linear models on $V$. More precisely, let $\mathcal{M}=\{M_1, \dots,M_T\}$ be a set of incomparable subsets of $V$. Then the hierarchical log-linear model generated by $\mathcal{M}$ is the set of distributions in $\mathcal{P}$ that satisfy
\begin{equation}\label{LLMdef}
\mbox{log } p_{\boldsymbol i} = \sum_{M \subseteq V: M \subseteq M_j \in \mathcal{M}} \gamma_M(\boldsymbol i_M), 
\end{equation} 
where $\gamma_{M'}(\boldsymbol i_{M'}) = 0$ implies  $\gamma_{M''} (\boldsymbol i_{M''}) = 0$ for any $M''\supseteq M'$, and $\gamma_{M}$ are called the \emph{interaction parameters} (interactions for short). Their identifiability  is assumed in the sequel. 

The set $\mathcal{M}$ partitions the power set of  $V$ into a descending class, consisting of subsets of $M_1, \dots, M_T$, and a complementary ascending class. The partition induces a mixed parameterization of $\mathcal{P}$ with the canonical parameters equal to the conditional odds ratios (or their logarithms) of the subsets in the ascending class, given the remaining variables, and the mean value parameters equal to the marginal distributions of the subsets in the descending class.  Under this parameterization, the canonical parameters of the distributions in the model generated by $\mathcal{M}$ are equal to $1$ (or zero) and the distributions are parameterized by the mean value parameters \citep{RudasSAGE}. The structure of the highest order interactions of the distributions in the  hierarchical log-linear model generated by  $\mathcal{M}=\{M_1, \dots,M_T\}$ is described next. In the sequel, $\bar{M}_t= V \setminus M_t$.

\begin{lemma}\label{interactions}
There exists a parameterization of $\mathcal{P}$ under which, for every $t = 1, \dots, T$, the interaction parameter $\gamma_t$ is equal to the logarithm of the conditional odds ratio of $M_t$ given  $\bar{M}_t = \boldsymbol i_{\bar{M}_t}$, and is invariant of the choice of $\boldsymbol i_{\bar{M}_t}$.
\end{lemma}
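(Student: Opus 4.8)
The plan is to take as the parameterization of $\mathcal{P}$ the log-linear expansion (\ref{LLMdef}) equipped with corner-point (baseline-category) identifiability constraints. Since the variables are binary, this is the parameterization in which $\gamma_M(\boldsymbol i_M)$ is permitted to be non-zero only when every coordinate of $\boldsymbol i_M$ is away from the reference category; each interaction then reduces to a single scalar, which I abbreviate by $\gamma_M$, and I write $\gamma_t := \gamma_{M_t}$. Under this coding (\ref{LLMdef}) reads $\log p_{\boldsymbol i} = \sum_{M \subseteq \mathrm{supp}(\boldsymbol i)} \gamma_M$, where $\mathrm{supp}(\boldsymbol i)$ denotes the set of non-reference coordinates of $\boldsymbol i$; moreover, for distributions in the hierarchical log-linear model generated by $\mathcal{M}$ the only possibly non-zero interactions are those indexed by a subset of some generator, so that $\gamma_M = 0$ whenever $M$ is not contained in any $M_s \in \mathcal{M}$.

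First I would express the log conditional odds ratio of $M_t$ given $\bar{M}_t = \boldsymbol i_{\bar{M}_t}$ as the alternating sum of log-cell-probabilities over the $2^{|M_t|}$ cells that agree with $\boldsymbol i_{\bar{M}_t}$ outside $M_t$; this is the generalized-odds-ratio formula already used for $\mathcal{COR}$ in Examples~\ref{FourVarUnfaith} and \ref{ExampleIntro}. Substituting the expansion above and interchanging the (finite) summations, the coefficient with which a given $\gamma_M$ enters is a signed sum of the form $\sum_{S}(-1)^{|S|}$ over the relevant sets $S$ of $M_t$-coordinates; by the elementary identity $\sum_{S \subseteq W}(-1)^{|S|} = \mathbf{1}\{W = \emptyset\}$ this coefficient vanishes unless $M \supseteq M_t$, and for $M \supseteq M_t$ it collapses to $\pm 1$, with the sign determined by the orientation chosen for the odds ratio — which I absorb into the choice of reference category so that the sign below is $+1$. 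What remains is
\[
\log \mathcal{COR}(M_t \mid \boldsymbol i_{\bar{M}_t}) \;=\; \sum_{M'' \subseteq U} \gamma_{M_t \cup M''},
\]
where $U \subseteq \bar{M}_t$ is the set of non-reference coordinates of $\boldsymbol i_{\bar{M}_t}$.

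The conclusion is then immediate from the incomparability of the generators: for any non-empty $M'' \subseteq \bar{M}_t$ the set $M_t \cup M''$ strictly contains $M_t$, hence cannot be a subset of any $M_s \in \mathcal{M}$ (that would give $M_s \supsetneq M_t$ for two members of $\mathcal{M}$, contradicting that $\mathcal{M}$ consists of incomparable subsets), so $\gamma_{M_t \cup M''} = 0$; only the term $M'' = \emptyset$ survives and $\log \mathcal{COR}(M_t \mid \boldsymbol i_{\bar{M}_t}) = \gamma_{M_t} = \gamma_t$. Since the right-hand side no longer involves $U$, this value is the same for every choice of $\boldsymbol i_{\bar{M}_t}$, which is the stated invariance. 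I expect the one mildly delicate point to be the bookkeeping in the middle step — pinning down the sign convention for the generalized odds ratio and justifying the interchange of sums — whereas the conceptual heart is just the observation that maximality of the generators forbids any interaction of order larger than $|M_t|$ lying above $M_t$, and it is precisely this that makes the conditional odds ratio homogeneous in the conditioning cell.
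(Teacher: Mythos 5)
Your proof is correct, but it follows a genuinely different route from the paper's. The paper works in the marginal log-linear parameterization, where $\gamma_{M_t}$ is \emph{defined} as the average of $\log \mathcal{COR}(M_t\mid \bar M_t=\boldsymbol i_{\bar M_t})$ over the conditioning cells; the substance of its argument is an induction on the ascending class of $M_1,\dots,M_T$ showing that for $M'\supsetneq M_t$ the odds ratio $\mathcal{COR}(M'\mid\cdot)$ equals a ratio $\mathcal{COR}(M_t\mid \bar M_t=\boldsymbol i_{\bar M_t})/\mathcal{COR}(M_t\mid \bar M_t=\boldsymbol j_{\bar M_t})$, so the vanishing of all higher-order conditional odds ratios forces the conditional odds ratios of $M_t$ to coincide across conditioning cells, and the average collapses to the common value. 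You instead fix the corner-point coding of (\ref{LLMdef}) and compute $\log \mathcal{COR}(M_t\mid \bar M_t = \boldsymbol i_{\bar M_t})$ directly as an alternating sum of log-cell probabilities, using $\sum_{S\subseteq W}(-1)^{|S|}=\mathbf 1\{W=\emptyset\}$ to reduce it to $\sum_{M''\subseteq U}\gamma_{M_t\cup M''}$, after which incomparability of the generators kills every term but $\gamma_{M_t}$. Both arguments hinge on the same fact --- maximality of $M_t$ annihilates every interaction strictly above it --- but yours is more elementary and self-contained (no marginal log-linear machinery, no induction), at the cost of committing to a particular coding and of the sign bookkeeping you flag; the latter is harmless, since the surviving coefficients all carry the same factor $(-1)^{|M_t|}$, which is just the orientation of the generalized odds ratio (and the orientation the paper itself uses, e.g.\ for $\gamma^{ABC}_{111}$ in Example~\ref{222marginal}, makes it $+1$). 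Since the lemma only asserts the existence of \emph{some} parameterization with the stated property, either choice of parameterization is legitimate, and your argument establishes the claim in full.
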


\begin{proof}
There exists a marginal log-linear parameterization of $\mathcal{P}$ under which for every $t = 1, \dots, T$, the interaction parameter $\gamma_{t}$,  corresponding to the generating marginal $M_t$, is the average log conditional odds ratio of $M_t$ conditioned on and averaged over $\bar{M}_t$ \citep[cf.][]{RudasBN2006}:  
$$\gamma_{M_t} = \frac{1}{|\mathcal{I}_{\bar{M}_t}|}\sum_{\boldsymbol i_{\bar{M}_t}}\mbox{log } \mathcal{COR}(M_t\mid \bar{M}_t = \boldsymbol i_{\bar{M}_t}).$$  Since $M_{t}$ is a  maximal interaction, $\mathcal{COR}(M'\mid \bar{M}' = \boldsymbol i_{\bar{M}'}) = 1,$ for any $M' \supsetneq M_t$.

Further, it can be shown by induction on the elements of the ascending class of $M_1, \dots, M_T$, that 
$$\mathcal{COR}(M'\mid\bar{M}' = \boldsymbol i_{\bar{M}'}) = \frac{ \mathcal{COR}(M_t \mid (M'\setminus M_t)\cup\bar{M}' =(\boldsymbol i_{M'\setminus M_t}, \boldsymbol i_{\bar{M}'}))}{\mathcal{COR}(M_t \mid (M'\setminus M_t)\cup\bar{M}' =(\boldsymbol j_{M'\setminus M_t}, \boldsymbol i_{\bar{M}'}))} = \frac{\mathcal{COR}(M_t \mid \bar{M}_t = \boldsymbol i_{\bar{M}_t})}{\mathcal{COR}(M_t \mid \bar{M}_t = \boldsymbol j_{\bar{M}_t})},$$
and thus,
$$\mbox{log } \mathcal{COR}(M_t\mid \bar{M}_t = \boldsymbol i_{\bar{M}_t}) = \mbox{log } \mathcal{COR}(M_t\mid \bar{M}_t = \boldsymbol j_{\bar{M}_t}),$$
for any  $\boldsymbol i_{\bar{M}_t}$ and $\boldsymbol j_{\bar{M}_t}$. Hence,
$$\gamma_{M_t} =  \mbox{log } \mathcal{COR}(M_t\mid \bar{M}_t = \boldsymbol i_{\bar{M}_t}),$$
for any $\boldsymbol i_{\bar{M}_t}$.
\end{proof}

The association structure of a discrete distribution in a hierarchical log-linear model generated by $\mathcal{M}$ can be described  with a hypergraph, $\mathcal{H}= \mathcal{H}(\mathcal{M})$ with vertices $V=\{1, \dots, K\}$ and hyperedges equal to the generating marginals, or, equivalently, to the maximum non-vanishing interactions in $\mathcal{M}$. Faithfulness to a hypergraph is naturally defined as follows:

\begin{definition}\label{HypergrFaithDef}
A distribution is \emph{faithful to a hypergraph} $\mathcal{H}$ if the non-vanishing maximal interactions of this distribution coincide with hyperedges of $\mathcal{H}$.
\end{definition}

This definition implies that a distribution in the log-linear model generated by $\mathcal{M}=\{M_1,\dots M_T\}$ is faithful to the hypergraph with hyperedges $M_1,\dots M_T$ if, for all $t \in \{1,\dots, T\}$, none of the conditional odds ratios of $M_t$ given the variables in $\bar{M}_t = V \setminus M_t$ is equal to $1$.  In the following result, we show that the class of hypergraphs is closed under the parametric faithfulness relation.

\begin{theorem} \label{pIII} 
The class of hypergraphs in $\mathcal{P}$ is closed under the faithfulness relation specified by Definition \ref{HypergrFaithDef}. 
\end{theorem}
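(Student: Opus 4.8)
The plan is to invoke the reformulation of closedness recorded earlier in this section: the class of hypergraphs is closed under the relation of Definition~\ref{HypergrFaithDef} exactly when every $\boldsymbol p\in\mathcal{P}$ is faithful to some hypergraph. So I would fix an arbitrary positive distribution $\boldsymbol p$ and construct such a hypergraph. Write out the identifiable log-linear expansion $\log p_{\boldsymbol i}=\sum_{M\subseteq V}\gamma_M(\boldsymbol i_M)$ of $\boldsymbol p$ in the saturated model, let $\mathcal{E}=\{M\subseteq V:M\neq\emptyset,\ \gamma_M\neq 0\}$ be the set of non-vanishing interactions of $\boldsymbol p$, and let $\mathcal{H}=\mathcal{H}(\boldsymbol p)$ be the hypergraph on $V$ whose hyperedges are the maximal elements of $\mathcal{E}$ under inclusion (taking $\mathcal{H}$ edgeless when $\mathcal{E}=\emptyset$, i.e. when $\boldsymbol p$ is uniform, so that $\mathcal{M}(\mathcal{H})=\{\boldsymbol p\}$). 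Write $\mathcal{M}(\mathcal{H})$ for the hierarchical log-linear model generated by these hyperedges. Since maximal elements of a poset are pairwise incomparable, the hyperedges form an incomparable family of subsets of $V$, hence a valid generating class, and $\mathcal{H}\in\frak{C}$.

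Two things then have to be verified. First, $\boldsymbol p\in\mathcal{M}(\mathcal{H})$: each $M$ with $\gamma_M\neq 0$ lies in $\mathcal{E}$ and is therefore contained in some maximal element of $\mathcal{E}$, i.e. in some hyperedge of $\mathcal{H}$, so the expansion of $\log p_{\boldsymbol i}$ is supported on the descending class generated by the hyperedges --- which is the defining property of $\mathcal{M}(\mathcal{H})$. Second, $\boldsymbol p$ is faithful to $\mathcal{H}$: by construction the hyperedges of $\mathcal{H}$ are precisely the maximal non-vanishing interactions of $\boldsymbol p$, which is what Definition~\ref{HypergrFaithDef} demands; equivalently, by Lemma~\ref{interactions}, $\gamma_{M_t}=\log\mathcal{COR}(M_t\mid\bar{M}_t=\boldsymbol i_{\bar{M}_t})\neq 0$ for each hyperedge $M_t$, so none of the conditional odds ratios of $M_t$ given $\bar{M}_t$ equals $1$. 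For the form of the definition that asks for a proper submodel, I would add that if $\boldsymbol p\in\mathcal{M}(\mathcal{H}_0)$ is unfaithful to a hypergraph $\mathcal{H}_0$, then some generator of $\mathcal{H}_0$ carries a vanishing interaction of $\boldsymbol p$, and --- the generators of $\mathcal{H}_0$ being incomparable --- this generator is neither a hyperedge of $\mathcal{H}(\boldsymbol p)$ nor contained in one, so $\mathcal{H}(\boldsymbol p)$ generates a proper submodel of $\mathcal{M}(\mathcal{H}_0)$ that contains and is faithful to $\boldsymbol p$.

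I do not anticipate a substantive obstacle: once one sees that faithfulness to a hypergraph is governed solely by the maximal non-vanishing interactions --- a family every positive distribution possesses --- the argument is essentially bookkeeping. The point that genuinely needs attention, and that is exactly what fails for graphs and DAGs, is that the non-vanishing interactions of a positive distribution need not form a simplicial complex: a pairwise-independent triple has $\gamma_{ABC}\neq 0$ while $\gamma_{AB}=\gamma_{AC}=\gamma_{BC}=0$, as in Example~\ref{ExampleIntro}. One must therefore check that passing to the maximal non-vanishing interactions still produces a hypergraph whose generated model contains $\boldsymbol p$, which holds because every non-vanishing interaction is contained in a maximal one. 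This is precisely why the hypergraph class --- unlike the graphical log-linear and DAG classes of Proposition~\ref{prop_undirected} and its directed counterpart --- is closed under the faithfulness relation.
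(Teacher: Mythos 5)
Your proposal is correct and follows essentially the same route as the paper: the paper also builds the hypergraph from the maximal elements of the family of non-vanishing interactions (phrased there as the maximal elements of the complement of the ascending class of subsets with vanishing conditional odds ratios) and concludes faithfulness by construction. Your additional verifications --- that $\boldsymbol p$ lies in the generated model because every non-vanishing interaction sits inside a maximal one, and that the resulting model is a proper submodel when $\boldsymbol p$ is unfaithful to a larger hypergraph --- are details the paper leaves implicit, not a different argument.
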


\begin{proof}
Let $P \in \mathcal{P}$. In the following, we show that there exists a hypergraph to which $P$ is faithful.  First, derive the ascending class, $\mathcal{A}$, of subsets of $V$ such that the log conditional odds ratios of the elements of $\mathcal{A}$ given the remaining variables vanish on $P$. Next, find the maximal (with respect to inclusion) elements,  $M_1, \dots, M_T$, of the complement of $\mathcal{A}$. Then, by construction, $P$ is faithful to the hypergraph with hyperedges $M_1, \dots,M_T$.  
\end{proof}

\begin{remark} 
This paper is solely concerned with discrete distributions. However, it is worth pointing out that Definition \ref{FaithNormalDef} makes sense for exponential families in general. In particular, multivariate normal distributions can be described using an exponential family whose canonical parameters correspond to pairwise interactions between the random variables in $V$. We mentioned in Remark \ref{rem_Gaussian} that there are examples of distributions in Gaussian DAG models that are not Markov to any nested DAG, and hence the class of Gaussian DAG models is not closed under the faithfulness relation. However, the class of  multivariate normal exponential families  is closed under the parametric faithfulness relation. This is the case since setting an additional canonical parameter to zero leads to a nested exponential family.
\end{remark}

\section{Parametric Strong-Faithfulness}\label{sectionStrongFaith}

In order to test statistical hypotheses when working with data, a stronger version of faithfulness is needed. In this section, we generalize the notion of parametric faithfulness to parametric strong-faithfulness and discuss difficulties arising with this concept in the discrete setting.

\begin{table}[b!]
\centering
\caption{Selected parameterizations, measures of association, and strong-faithfulness conditions for the $2 \times 2$ contingency table.}
\vspace{6mm}

\begin{tabular}{l|p{20mm}|p{26mm}|p{59mm}}
\hline 
&  & & \\
Parametrization & Parameter space& Variation independence&  Association function;  $\lambda$-strong-faithfulness condition  \\ 
\hline 
& &  &\\ 
Cell probabilities: &  & &  \\ 
$p_{00}, p_{01},$  & \multicolumn{1}{c|}{Simplex} & \multicolumn{1}{c|}{No} & $\phi_1 = \left|\mbox{log} \left(\frac{p_{00}p_{11}}{p_{01}p_{10}} \right)\right| > \lambda$  \\ [5pt]
$p_{10}, p_{11}$& \multicolumn{1}{c|}{$\Delta_3$} & \\ 
 & & & $\phi_2 = \left|\frac{p_{00}p_{11} - p_{01}p_{10}}{p_{00}p_{11} + p_{01}p_{10}}\right| > \lambda$  \\
&  & &\\ 
\hline 
& &  &\\ 
Conditional probabilities: & &  &\\  [5pt]
$\theta_1 = \mathbb{P}(A=0)$,  & \multicolumn{1}{c|}{$(0,1)^3$} & \multicolumn{1}{c|}{Yes} & $\phi_3 = |\theta_2 - \theta_3| > \lambda$  \\
$\theta_2 = \mathbb{P}(B=0\mid A=0)$, & & \\
 $\theta_3 = \mathbb{P}(B=0\mid A=1)$ & &   \\ 
\hline
\end{tabular}
\label{StrFtwoway}
\end{table}

\subsection{Strong-faithfulness in the discrete setting}
\label{subsec_strong_faith}

A distribution in a model is faithful to it if the model fully describes the conditional independence structure in this distribution. It is further called \emph{strong-faithful} if the conditional dependencies present in the distribution are strong enough. The concept of strong-faithfulness, originally defined by  \cite{ZhangSpirtesLambdaFaith}, is usually applied to multivariate normal distributions: For a given $\lambda > 0$, a multivariate normal distribution in a DAG model is $\lambda$\emph{-strong-faithful} with respect to this DAG if all non-zero partial correlations are bounded away from zero by $\lambda$. A formal definition of strong-faithfulness in the discrete case has not been proposed, although some analogies  were used. For example, \cite*{Zuk} made use of the assumption that the conditional probabilities in a Bayesian network are bounded between $\lambda$ and $1 - \lambda$. This can be seen as a form of strong-faithfulness. 

In the discrete setting, one problem is that many variants of strong-faithfulness relations can be considered. Whether a  distribution is $\lambda$-strong-faithful to a model, depends on the choice of parameterization and the measure of association. This is illustrated in the following example for two binary random variables.

\begin{example}\label{22param}
Let $V = \{A, B\}$ and consider the saturated model $[AB]$, which allows for interaction between $A$ and $B$. A distribution in which this interaction vanishes is unfaithful to $[AB]$ and belongs to the model of independence, $A \ci B$. A distribution in which the association between $A$ and $B$ is strong enough is called strong-faithful to $[AB]$. While in the multivariate normal setting the partial correlations are a standard measure of association, in the discrete setting there are many viable choices of association measures, see \cite{GoodmanKruskal74}. Table \ref{StrFtwoway} illustrates different possible definitions of strong-faithfulness based on three different measures of association, the log odds ratio, $\phi_1$, Yule's coefficient of association, $\phi_2$, and the absolute difference between the conditional probabilities, $\phi_3$.  In all three cases, the parameter space has finite volume. So it is possible to estimate the proportion (relative volume) of distributions that do not satisfy the $\lambda$-strong-faithfulness relation with respect to $[AB]$. Figure \ref{2by2Prop} shows that this proportion varies considerably depending on the chosen parameterization and association measure.
\qed
\end{example}

\begin{figure}[b!]
\centering
\includegraphics[scale=0.44]{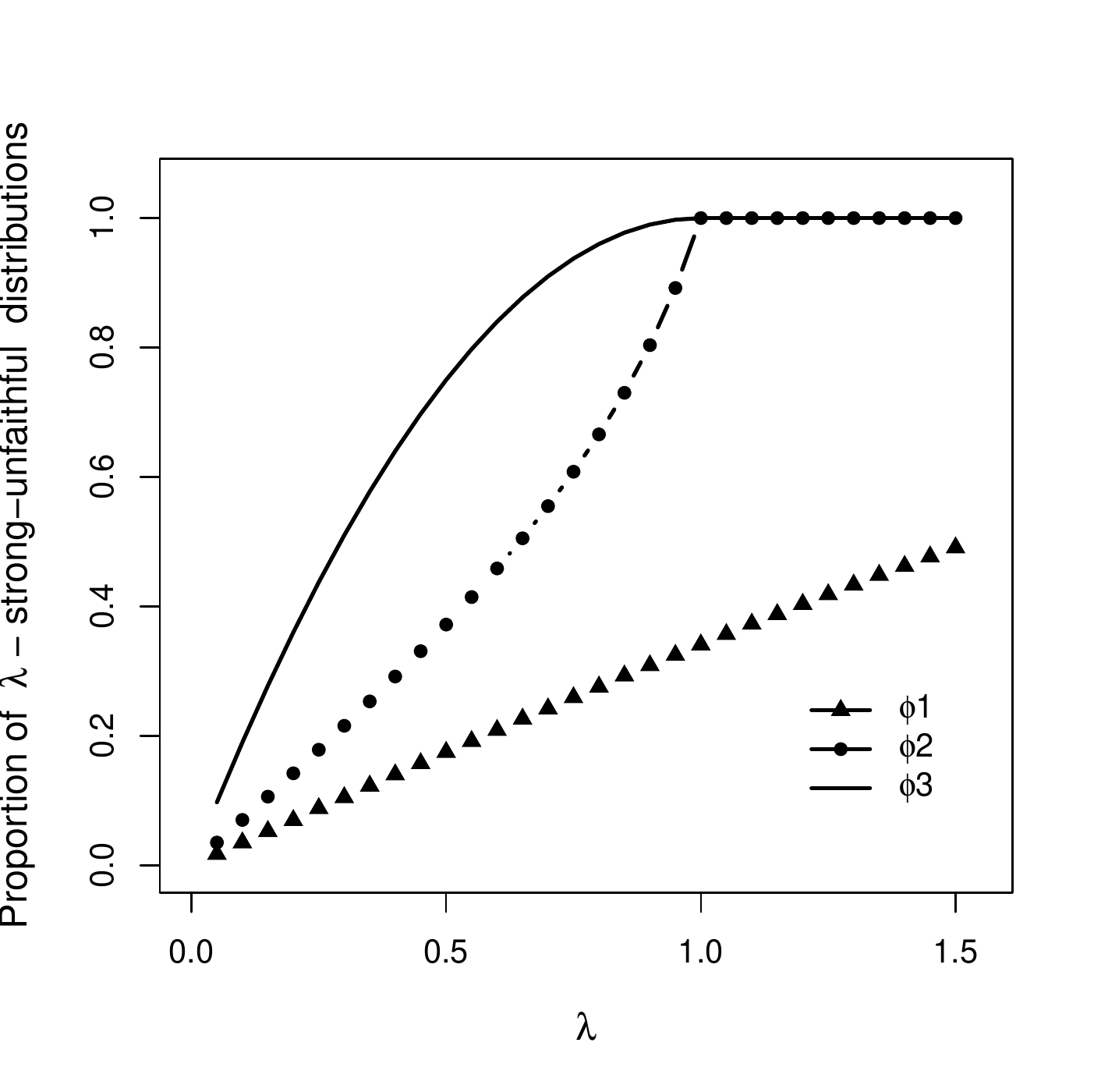}
\caption{The proportions of distributions that are not $\lambda$-strong-faithful to the model $[AB]$ with respect to different association measures, see Example \ref{22param}.}
\label{2by2Prop}
\end{figure}

The proportion of distributions in a model that do not satisfy the strong-faithfulness relation with respect to this model is of importance for model selection procedures, which are often based on the strong-faithfulness assumption. Lemma \ref{interactions} justifies the use of $\phi_1$ to define strong-faithfulness in the discrete case. In the following, we propose the concept of strong-faithfulness to a hypergraph and, assuming strong-faithfulness, prove existence of uniformly consistent estimators of the hypergraph parameters.  

\subsection{Strong-faithfulness with respect to a hypergraph}

Let $\mathcal{H}$ be the hypergraph generated by a set of marginals $\mathcal{M}= \{M_1, \dots, M_T\}$. For $\boldsymbol p \in \mathcal{H}$ let $\boldsymbol \gamma(\boldsymbol p) = (\gamma_1(\boldsymbol p), \dots, \gamma_T(\boldsymbol p))$ denote the set of  interaction parameters of $\boldsymbol p$ corresponding to the hyperedges of  $\mathcal{H}$. 

\begin{definition}
For $\lambda >0$, a distribution $\boldsymbol p \in \mathcal{H}$ is $\lambda$\emph{-strong-faithful}  to $\mathcal{H}$ if 
\begin{equation}\label{tube}
\operatorname{min} \{|\gamma_1(\boldsymbol p)|, \dots, |\gamma_T(\boldsymbol p)|\} > \lambda.
\end{equation}
\end{definition}

As described in Section \ref{subsec_strong_faith}, one can, in principle, use different measures of association to define strong-faithfulness. The advantage of the definition given here is that it generalizes the original definition of strong-faithfulness given by \cite{ZhangSpirtesLambdaFaith}. For a hypergraph generated by two-way marginals the interactions  $\boldsymbol \gamma(\boldsymbol p)$ are analogous to partial correlations of a multivariate normal distribution \citep[cf.][]{WermuthAnalogies}.  Therefore, the definition of strong-faithfulness to a hypergraph proposed here is consistent with the original definition of strong-faithfulness of a multivariate normal distribution with respect to a DAG given by \cite{ZhangSpirtesLambdaFaith}. In addition, as we will show in Section \ref{sec_hypergraph_search}, strong-faithfulness with respect to a hypergraph allows to build uniformly consistent algorithms for learning hypergraphs. 

In the following example, we illustrate the concept of strong-faithfulness with respect to a hypergraph for distributions on the $2 \times 2 \times 2$ contingency table.

\begin{example}\label{222marginal} 
Let $V = \{A, B, C\}$. A distribution of $V$ can be parameterized by 
\begin{equation*}
\mbox{log } \boldsymbol p = \mathbf{M} \boldsymbol \gamma,
\end{equation*}
where
\begin{equation*}
\mathbf{M}=\left(\begin{array}{cccccccc}
1 & 0 & 0 & 0 & 0 & 0 & 0 & 0 \\ 
 1 & 0 & 0 & 1 & 0 & 0 & 0 & 0 \\ 
 1 & 0 & 1 & 0 & 0 & 0 & 0 & 0 \\ 
 1 & 0 & 1 & 1 & 0 & 0 & 1 & 0 \\ 
 1 & 1 & 0 & 0 & 0 & 0 & 0 & 0 \\ 
 1 & 1 & 0 & 1 & 0 & 1 & 0 & 0 \\ 
 1 & 1 & 1 & 0 & 1 & 0 & 0 & 0 \\ 
 1 & 1 & 1 & 1 & 1 & 1 & 1 & 1 
\end{array}\right),
\end{equation*}
and 
$$\boldsymbol \gamma = (\gamma^{\emptyset}, \gamma^{A}_{1}, \gamma^{B}_{1}, \gamma^{C}_{1} , \gamma^{AB}_{11}, \gamma^{AC}_{11}, \gamma^{BC}_{11}, \gamma^{ABC}_{111})$$
are the interaction parameters corresponding to the marginal distributions indicated in the superscript. The matrix $\mathbf{M}$ is of full rank, and it can easily be shown that
\begin{align*}
&\gamma^{\emptyset} = \mbox{log } p_{000}, \hspace{14mm}
\gamma^{A}_{1}  = \mbox{log } \frac{p_{100}}{p_{000}} ,\\ 
&\gamma^{B}_{1} = \mbox{log } \frac{p_{010}}{p_{000}}, \hspace{13mm}
\gamma^{C}_{1} = \mbox{log }  \frac{p_{001}}{p_{000}},\\
&\gamma^{AB}_{11} = \mbox{log } \frac{p_{000}p_{110}}{p_{010}p_{100}}, \quad 
\gamma^{AC}_{11} = \mbox{log }  \frac{p_{000}p_{101}}{p_{001}p_{100}}, \\
&\gamma^{BC}_{11} = \mbox{log } \frac{p_{000}p_{011}}{p_{001}p_{010}}, \quad \gamma^{ABC}_{111} = 
\mbox{log } \frac{p_{001}p_{010}p_{100}p_{111}}{p_{000}p_{011}p_{101}p_{110}}.
\end{align*}
\vspace{0cm}

\noindent In the following table we give the $\lambda$-strong-faithfulness conditions for several hypergraph models:

\begin{equation*}
\begin{tabular}{l| l}
{Hyperedges} & \multicolumn{1}{c}{Strong-faithfulness constraints} \\
\hline  
& \\
$\{ABC\}$ &  $|\gamma^{ABC}_{111}| > \lambda$  \\ [5pt]
$\{AB\}, \{AC\}, \{BC\}$ & $\operatorname{min} \{|\gamma^{AB}_{11}|, |\gamma^{AC}_{11}|, |\gamma^{BC}_{11}|\} > \lambda$ \\ [5pt]
$\{AC\}, \{BC\}$ & $\operatorname{min} \{|\gamma^{AC}_{11}|, |\gamma^{BC}_{11}|\} > \lambda$ \\ [5pt]
$\{A\}, \{BC\}$ & $\operatorname{min} \{|\gamma^{A}_{1}|, |\gamma^{BC}_{11}|\} > \lambda$ \\ [5pt]
$\{A\}, \{B\}, \{C\}$ & $\operatorname{min} \{|\gamma^{A}_{1}|, |\gamma^{B}_{1}|, |\gamma^{C}_{1}| \} > \lambda$ \\ [5pt]
\end{tabular}
\end{equation*}
\end{example}

\subsection{Hypergraph search}
\label{sec_hypergraph_search}

In this section, we discuss how to construct hypothesis tests, when the association measure is based on the interaction parameters $\boldsymbol{\gamma}(\boldsymbol p)$, and how to perform a hypergraph search based on these hypothesis tests.

Let $\mathcal{H}$ be a hypergraph generated by the marginals $M_1,\dots , M_T$ and let $\gamma_1,\dots ,\gamma_T$ be the corresponding interaction parameters. We denote by $\mathcal{H}_{\lambda}$ the set of distributions that are $\lambda$-strong-faithful to the hypergraph $\mathcal{H}$, i.e.,
$$\mathcal{H}_{\lambda} = \{\boldsymbol p \in \mathcal{H}: \,\, \operatorname{min}\{|\gamma_1(\boldsymbol p)|, \dots, |\gamma_T(\boldsymbol p)|\} > \lambda\},$$
and define 
$$\mathcal{H}_{\lambda, \delta} = \mathcal{H}_{\lambda} \cap \{ \boldsymbol p \in \mathcal{P}: \,\, p_{\boldsymbol i} \in [\delta, 1), \, \sum_{\boldsymbol i \in \mathcal{I}} p_{\boldsymbol i} = 1\},$$
where $\delta > 0$ is small enough so $\mathcal{H}_{\lambda, \delta}$ is not empty.

If $M_t$, for $t \in \{1, \dots, T\}$, is an interaction of order $h_t$, then the conditional odds ratio of $M_t$ given  the variables in $\bar{M}_t$ is the ratio of the product of some $2^{h_t}$ cell probabilities and the product of a disjoint set of $2^{h_t}$ cell probabilities.  Since $p_{\boldsymbol i} \in [\delta, 1)$ for all $\boldsymbol i \in \mathcal{I}$, the interaction  parameter $\,\,|\gamma_t(\boldsymbol p)| \leq 2^{h_t}\mbox{log } ((1-\delta)/\delta)$, and, therefore,
$$|\gamma_t(\boldsymbol p)| \leq C(\delta), \quad \mbox{for } t=1, \dots, T,$$
where $C(\delta) = 2^{{\operatorname{max}}\{h_1, \dots, h_T\}} \mbox{log } ((1-\delta)/\delta)$. Here, $C(\delta)$ is an upper bound on the interaction parameters (it plays the same role as the constant $M$ in Assumption (A4) of \cite{KalischBullm} for the Gaussian setting).

\begin{theorem} 
Let $\mathbf Y$ have a multinomial distribution with parameters $N$ and $\boldsymbol p$. Assume that, under the log-linear model corresponding to $\mathcal{H}$, the maximum likelihood estimates of the interaction parameters 
$$\hat{\boldsymbol \gamma}^{(N)}(\boldsymbol p)=(\hat{\gamma}_1^{(N)}(\boldsymbol p), \dots, \hat{\gamma}_T^{(N)}(\boldsymbol p))=({\gamma}_1^{(N)}(\hat{\boldsymbol p}), \dots, {\gamma}_T^{(N)}(\hat{\boldsymbol p}))$$
exist and are unique. Then, $\hat{\boldsymbol \gamma}^{(N)}(\boldsymbol p)$ is a uniformly, over $\mathcal{H}_{\lambda, \delta}$, consistent estimator of ${\boldsymbol \gamma}(\boldsymbol p)$.
\end{theorem}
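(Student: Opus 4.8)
The plan is to factor the claim through two facts and glue them together with a routine $\varepsilon$--$\eta$ argument. Write $\hat{\boldsymbol p}$ for the MLE of $\boldsymbol p$ in the hierarchical log-linear model generated by $\mathcal{M}=\{M_1,\dots,M_T\}$, so that $\hat{\gamma}_t^{(N)}(\boldsymbol p)=\gamma_t(\hat{\boldsymbol p})$. \emph{Fact A:} $\hat{\boldsymbol p}$ is a uniformly consistent estimator of $\boldsymbol p$ over $\mathcal{H}_{\lambda,\delta}$. \emph{Fact B:} on the compact region $\mathcal{K}_{\delta'}=\{\boldsymbol q\in\mathcal{P}:q_{\boldsymbol i}\ge\delta'\ \text{for all}\ \boldsymbol i\}$ each interaction $\boldsymbol q\mapsto\gamma_t(\boldsymbol q)$ is uniformly continuous; indeed, by Lemma \ref{interactions}, $\gamma_t$ is, up to sign, the logarithm of a conditional odds ratio, hence a fixed $\pm1$ combination of the functions $\log q_{\boldsymbol i}$, which is smooth on $\{q_{\boldsymbol i}>0\ \forall\boldsymbol i\}$ and therefore uniformly continuous on the compact set $\mathcal{K}_{\delta'}$ for any $0<\delta'\le\delta$. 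Granting A and B, for $\varepsilon>0$ pick $\eta>0$ (from the modulus of continuity of the $\gamma_t$ on $\mathcal{K}_{\delta/2}$) so that $\|\boldsymbol q-\boldsymbol p\|_\infty<\eta$ with $\boldsymbol q,\boldsymbol p\in\mathcal{K}_{\delta/2}$ forces $|\gamma_t(\boldsymbol q)-\gamma_t(\boldsymbol p)|<\varepsilon$ for all $t$; combining with Fact A (which also delivers $\hat{\boldsymbol p}\in\mathcal{K}_{\delta/2}$ with high probability, see below) yields $\sup_{\boldsymbol p\in\mathcal{H}_{\lambda,\delta}}\mathbb{P}\big(\operatorname{max}_{t}|\hat{\gamma}_t^{(N)}(\boldsymbol p)-\gamma_t(\boldsymbol p)|\ge\varepsilon\big)\to 0$, which is the assertion.

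It remains to prove Fact A. The sufficient statistic of the log-linear model is the tuple of marginal counts $(\mathbf{Y}_{M_1},\dots,\mathbf{Y}_{M_T})$, and by the likelihood equations $\hat{\boldsymbol p}$ is the unique distribution in the model whose $M_t$-marginals equal $\mathbf{Y}_{M_t}/N$ for every $t$. Standard exponential-family theory \citep[cf.][]{Barndorff1978} tells us that the moment map is a homeomorphism from the model onto the relative interior of the associated marginal polytope, so its inverse $\Phi$, carrying an admissible marginal tuple to the fitted distribution, is continuous there. Since every $\boldsymbol p\in\overline{\mathcal{H}_{\lambda,\delta}}$ has cell probabilities bounded below by $\delta$, the set $\mathcal{S}$ of its marginal tuples is a compact subset of that relative interior; hence $\Phi$ is uniformly continuous on a fixed compact neighborhood $\mathcal{N}\supseteq\mathcal{S}$, and, $\boldsymbol p$ being in the model, $\Phi(\mathcal{S})=\overline{\mathcal{H}_{\lambda,\delta}}\subseteq\mathcal{K}_{\delta}$. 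Thus for any $\eta>0$ there is $\zeta>0$ so that any marginal tuple within $\ell_\infty$-distance $\zeta$ of $(\boldsymbol p_{M_1},\dots,\boldsymbol p_{M_T})$ lies in $\mathcal{N}$ and its $\Phi$-image is within $\eta$ of $\boldsymbol p$ in $\|\cdot\|_\infty$; fixing $\eta=\delta/2$ once and for all also puts $\hat{\boldsymbol p}\in\mathcal{K}_{\delta/2}$ on the corresponding event, which is what was used above.

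To bound that event uniformly in $\boldsymbol p$, apply Hoeffding's inequality to each cell count $Y_{\boldsymbol i}\sim\mathrm{Binomial}(N,p_{\boldsymbol i})$ together with a union bound: $\mathbb{P}(\|\mathbf{Y}/N-\boldsymbol p\|_\infty\ge s)\le 2|\mathcal{I}|\exp(-2Ns^2)$, a bound free of $\boldsymbol p$. Each $\mathbf{Y}_{M_t}/N$ is a sum of at most $|\mathcal{I}|$ entries of $\mathbf{Y}/N$, so taking $s=\zeta/|\mathcal{I}|$ controls all marginal tuples within $\zeta$ simultaneously with probability at least $1-2|\mathcal{I}|\exp(-2N\zeta^2/|\mathcal{I}|^2)$, uniformly over $\boldsymbol p\in\mathcal{H}_{\lambda,\delta}$. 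Letting $N\to\infty$ gives Fact A, hence the theorem.

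\textbf{Main obstacle.} The crux is the uniform consistency of $\hat{\boldsymbol p}$ itself: outside the decomposable case the MLE has no closed form, so its consistency---and in particular the existence of a single modulus of continuity valid for all $\boldsymbol p$---must be read off from the implicit identity ``fitted marginals $=$ observed marginals'' together with the regularity of the moment map, and this is exactly where the lower bound $p_{\boldsymbol i}\ge\delta$ is needed, keeping the relevant part of the marginal polytope uniformly in its interior. By contrast, the bound $\lambda$ plays no role in the consistency argument; via the companion estimate $|\gamma_t(\boldsymbol p)|\le C(\delta)$ it only confines the true and estimated interaction parameters to a common compact range, which is what subsequently makes finite-sample control of Type I and Type II errors possible.
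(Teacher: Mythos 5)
Your proposal is correct, but it takes a genuinely different route from the paper's. The paper works directly with the estimator $\hat{\gamma}_t = \boldsymbol c_t'\log\hat{\boldsymbol p}$: it invokes the asymptotic normality of log-linear contrasts (Theorem 14.6-4 of Bishop, Fienberg and Holland) to obtain the asymptotic variance $\frac{1}{N}\boldsymbol c_t'\diag^{-1}(\boldsymbol p)\boldsymbol c_t$, bounds this uniformly by $\boldsymbol c_t'\boldsymbol c_t/(N\delta)$ using $p_{\boldsymbol i}\ge\delta$, and applies Chebyshev to get the explicit uniform lower bound $1-\max_t(\boldsymbol c_t'\boldsymbol c_t)/(N\delta\epsilon^2)$ on the coverage probability. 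You instead establish uniform consistency of the fitted distribution $\hat{\boldsymbol p}$ itself (via Birch's identity ``fitted marginals $=$ observed marginals,'' the homeomorphism property of the moment map on the relative interior of the marginal polytope, and Hoeffding plus a union bound) and then transfer the conclusion to the $\gamma_t$ by uniform continuity of $\boldsymbol q\mapsto\boldsymbol c_t'\log\boldsymbol q$ on $\{q_{\boldsymbol i}\ge\delta/2\}$. Your route is non-asymptotic and in one respect more careful: the paper's step $\mathbb{P}(|\hat{\gamma}_t^{(N)}(\boldsymbol p)-\gamma_t(\boldsymbol p)|<\epsilon)=\mathbb{P}(|Z|<\cdot)$ holds only in the limit, and the uniformity of that normal approximation over $\mathcal{H}_{\lambda,\delta}$ is not addressed there, whereas your Hoeffding bound is exact and manifestly free of $\boldsymbol p$. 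What you lose is explicitness: your modulus of continuity $\zeta$ comes from compactness and is non-constructive, so you get no rate, while the paper's explicit variance bound is exactly what is reused to calibrate the threshold $\lambda^*_N$ in Theorem \ref{powerTh}. Both arguments use the floor $p_{\boldsymbol i}\ge\delta$ in the same essential way, and your closing observation that $\lambda$ plays no role in the consistency argument matches the paper, whose final bound depends on $\delta$ and $\epsilon$ but not on $\lambda$.
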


\begin{proof}
For $t \in \{1, \dots, T\}$, $\,\gamma_t(\boldsymbol p) = \boldsymbol c_{t}' \operatorname{log } \boldsymbol p$, where $\boldsymbol c_t$ is a vector in $\mathbb{Z}^{|\mathcal{I}|}$ whose components are comprised of equal number, $2^{h_t}$, of  $1$'s and $-1$'s, and some $0$'s. By Theorem 14.6-4 in \cite{BFH}, as $N \to \infty$, $\gamma_t(\boldsymbol p)$ is asymptotically normal with mean zero and variance  
\begin{equation}\label{asVar}
{\var}(\hat{\gamma}_t) = \frac{1}{N}\boldsymbol c'_t \diag^{-1}(\boldsymbol p)\boldsymbol c_t.
\end{equation}
For every $\boldsymbol p \in \mathcal{H}_{\lambda, \delta}$,
\begin{equation}\label{VarBound}
{\var}(\hat{\gamma}_t) \leq \frac{\boldsymbol c_t'\boldsymbol c_t}{N\delta}, 
\end{equation}
and thus, using the Chebyshev inequality,
\begin{eqnarray}\label{ConsBound}
\mathbb{P}\left(|\hat{\gamma}_t^{(N)}(\boldsymbol p) - \gamma_t(\boldsymbol p)| < \epsilon\right) &=& \mathbb{P}\left(|Z|< \frac{\sqrt{N} \epsilon}{\sqrt{\boldsymbol c_{t}'\diag^{-1}(\boldsymbol p)\boldsymbol c_{t}}}\right)  
\geq \mathbb{P}\left(|Z| < \epsilon\sqrt{\frac{N\delta}{\boldsymbol c_t'\boldsymbol c_t}}\right)  \nonumber \\ 
\nonumber \\
\nonumber \\
&\geq& 1 - \frac{\boldsymbol c_t'\boldsymbol c_t}{N\delta\epsilon^2} \geq 1 - \frac{\underset{t =1, \dots, T}{\operatorname{max}}(\boldsymbol c_t'\boldsymbol c_t)}{N\delta\epsilon^2},   \quad \forall \epsilon > 0,
\end{eqnarray}
where $Z$ is a random variable with a standard normal distribution. Therefore, $\mathbb{P}(|\hat{\gamma}_t^{(N)}(\boldsymbol p) - {\gamma_t}(\boldsymbol p)| < \epsilon) \to 1$ for every $t = 1, \dots,T$, uniformly over $\boldsymbol p \in \mathcal{H}_{\lambda, \delta}$. Since the lower bound in (\ref{ConsBound}) does not depend on $t$, the proof is complete.
\end{proof}

We now address the question of how to select the threshold $\lambda$ in a hypergraph learning procedure.  We fix a $t \in \{1, \dots, T\}$ and consider testing the ``one-hyperedge'' hypothesis $H_{0t}: \, \gamma_t = 0$ versus $H_{1t}: \, \gamma_t \neq 0$ under a significance level $\alpha$. Let $\hat{\boldsymbol p}$ be the observed distribution and let $\hat{\gamma}_t = \gamma_t(\hat{\boldsymbol p}) = \boldsymbol c_t \mbox{log }\hat{\boldsymbol p}$ denote the corresponding interaction parameter. By Slutsky's Theorem, 
$$\sqrt{N} \frac{\hat{\gamma}_t}{\sqrt{\boldsymbol c'_t \diag^{-1}(\hat{\boldsymbol p})\boldsymbol c_t}} \to N(0, 1), \, \mbox{ as } N \to \infty,$$
and thus we reject the null hypothesis if 
\begin{equation}\label{TestSt}
\frac{|\hat{\gamma}_t|}{\sqrt{ \frac{1}{N}\boldsymbol c'_t \diag^{-1}(\hat{\boldsymbol p})\boldsymbol c_t}} > z_{1-\alpha/2},
\end{equation}
where $z_{1-\alpha/2} = \Phi^{-1}(1-\alpha/2)$ is the corresponding quantile of the standard normal distribution. With such a procedure, the probability of wrongly rejecting the null hypothesis does not exceed $\alpha$. 

\begin{theorem}\label{powerTh}
Let $\epsilon \in (0,1/2)$ and set 
\begin{equation}\label{LambdaVar}
\lambda^*_N = \frac{z_{1-\alpha/2}}{N^{1/2-\epsilon}}{\underset{t =1, \dots, T}{\operatorname{min}}\sqrt{\boldsymbol c_t'\boldsymbol c_t}}. 
\end{equation}
For the distributions that are $\lambda^*_N$-strong-faithful to $\mathcal{H}$, the 
 power of  the one-hyperedge test approaches $1$ as $N \to \infty$.
\end{theorem}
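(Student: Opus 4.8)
The plan is to show that, for a distribution $\boldsymbol p$ that is $\lambda^*_N$-strong-faithful to $\mathcal{H}$, the rejection event (\ref{TestSt}) has probability tending to $1$. The whole argument is a rate comparison: the estimation error $\hat\gamma_t^{(N)}(\boldsymbol p)-\gamma_t(\boldsymbol p)$ and the (random) critical bound on the right of (\ref{TestSt}) are both of order $N^{-1/2}$, whereas the separation guaranteed by strong-faithfulness, $|\gamma_t(\boldsymbol p)|>\lambda^*_N$, decays only at the slower rate $N^{-(1/2-\epsilon)}$; the gap is the factor $N^{\epsilon}$, which is exactly what the exponent $1/2-\epsilon$ in (\ref{LambdaVar}) is engineered to produce.

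First I would control the critical bound. Since the nonzero entries of $\boldsymbol c_t$ are $\pm1$, $\boldsymbol c_t'\diag^{-1}(\hat{\boldsymbol p})\boldsymbol c_t=\sum_{\boldsymbol i:\,c_{t,\boldsymbol i}\neq0}1/\hat p_{\boldsymbol i}$, and on the event $\{\hat p_{\boldsymbol i}\geq\delta/2\ \text{for all }\boldsymbol i\}$ — whose probability tends to $1$, uniformly over $\mathcal{H}_{\lambda,\delta}$, by a Chebyshev bound exactly as in (\ref{ConsBound}) — this is at most $2\,\boldsymbol c_t'\boldsymbol c_t/\delta$. Hence, with probability tending to $1$, the right-hand side of (\ref{TestSt}) is bounded by $z_{1-\alpha/2}\sqrt{2\,\boldsymbol c_t'\boldsymbol c_t/(N\delta)}=O(N^{-1/2})$. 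Next I would control the estimation error: by Theorem 14.6-4 of \cite{BFH} (equivalently, by (\ref{VarBound}) together with Chebyshev), $\sqrt N\,(\hat\gamma_t^{(N)}(\boldsymbol p)-\gamma_t(\boldsymbol p))=O_P(1)$, so $|\hat\gamma_t^{(N)}(\boldsymbol p)-\gamma_t(\boldsymbol p)|=O_P(N^{-1/2})$ as well.

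Combining the two, the random quantity
\[
R_N:=|\hat\gamma_t^{(N)}(\boldsymbol p)-\gamma_t(\boldsymbol p)|+z_{1-\alpha/2}\sqrt{\tfrac1N\,\boldsymbol c_t'\diag^{-1}(\hat{\boldsymbol p})\boldsymbol c_t}
\]
satisfies $\sqrt N\,R_N=O_P(1)$, whereas by (\ref{LambdaVar}), $\sqrt N\,\lambda^*_N=z_{1-\alpha/2}\,N^{\epsilon}\min_{t}\sqrt{\boldsymbol c_t'\boldsymbol c_t}\to\infty$. Therefore $\mathbb{P}(R_N<\lambda^*_N)\to1$. On the event $\{R_N<\lambda^*_N\}$, the triangle inequality together with $|\gamma_t(\boldsymbol p)|>\lambda^*_N$ gives
\[
|\hat\gamma_t^{(N)}(\boldsymbol p)|\;\geq\;|\gamma_t(\boldsymbol p)|-|\hat\gamma_t^{(N)}(\boldsymbol p)-\gamma_t(\boldsymbol p)|\;>\;\lambda^*_N-|\hat\gamma_t^{(N)}(\boldsymbol p)-\gamma_t(\boldsymbol p)|\;\geq\;z_{1-\alpha/2}\sqrt{\tfrac1N\,\boldsymbol c_t'\diag^{-1}(\hat{\boldsymbol p})\boldsymbol c_t},
\]
which is precisely the rejection event (\ref{TestSt}). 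Hence the power is at least $\mathbb{P}(R_N<\lambda^*_N)\to1$, and since every bound used depends on $\boldsymbol p$ only through $\delta$, the convergence is uniform over $\mathcal{H}_{\lambda^*_N,\delta}$; for a fixed $\boldsymbol p$ with $\min_t|\gamma_t(\boldsymbol p)|>0$ the statement reduces to the fact that $\boldsymbol p$ is $\lambda^*_N$-strong-faithful for all $N$ large enough because $\lambda^*_N\to0$.

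The argument is essentially routine; the one step that needs a little care is the data-dependent denominator $\diag^{-1}(\hat{\boldsymbol p})$ in the test statistic, which I handle either by a Slutsky-type replacement of $\hat{\boldsymbol p}$ by its limit (for the pointwise statement) or, for the uniform statement, by restricting to the high-probability event on which $\hat{\boldsymbol p}$ stays bounded away from the boundary, as above. No other obstacle is anticipated.
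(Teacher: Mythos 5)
Your proof is correct, and it rests on the same rate comparison as the paper's: the threshold $\lambda^*_N\sim N^{-1/2+\epsilon}$ decays more slowly than both the estimation error and the critical value, which are $O(N^{-1/2})$, and the $N^{\epsilon}$ gap drives the power to $1$. The execution differs in the final step, though. The paper invokes the asymptotic Gaussian power formula, bounding the power below by $\Phi\bigl(|\hat\gamma_t|/\sqrt{\tfrac1N\boldsymbol c_t'\diag^{-1}(\hat{\boldsymbol p})\boldsymbol c_t}-z_{1-\alpha/2}\bigr)$ and asserting that strong-faithful distributions ``satisfy (\ref{TestSt})''; as written this conflates the estimate $\hat\gamma_t$ with the true noncentrality parameter $\gamma_t$, and it leaves implicit the upper bound $\tfrac1N\boldsymbol c_t'\diag^{-1}(\hat{\boldsymbol p})\boldsymbol c_t\leq \boldsymbol c_t'\boldsymbol c_t/(N\delta)$ that is needed for the argument of $\Phi$ to diverge. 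Your event-inclusion argument via the triangle inequality ($R_N<\lambda^*_N$ implies rejection, and $\mathbb{P}(R_N<\lambda^*_N)\to1$ because $\sqrt N\,R_N=O_P(1)$ while $\sqrt N\,\lambda^*_N\to\infty$) repairs both points: it keeps $\gamma_t$ and $\hat\gamma_t$ distinct, and it handles the data-dependent denominator explicitly by restricting to the high-probability event $\{\hat p_{\boldsymbol i}\geq\delta/2\}$. The paper's route is shorter and gives an explicit power lower bound in closed form; yours is more careful and makes the uniformity over $\mathcal{H}_{\lambda,\delta}$ transparent. Both ultimately use the same ingredients (the variance bounds (\ref{VarBound}) and the asymptotic normality from Theorem 14.6-4 of \cite{BFH}), so I would regard yours as a tightened version of the paper's argument rather than a fundamentally new one.
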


\begin{proof}
The asymptotic variance of $\hat{\gamma}_t$ is bounded below by  
$${\var}(\hat{\gamma}_t) = \frac{1}{N}\boldsymbol c'_t \diag^{-1}(\hat{\boldsymbol p})\boldsymbol c_t \geq \frac{1}{N} \boldsymbol c_t'\boldsymbol c_t.$$
Since for an $h_t$-order interaction, the vector $\boldsymbol c_t$  has  $2^{h_t}$ components equal to $1$,  $2^{h_t}$  components equal to $-1$, and the remaining components equal to zero, we have  $\sqrt{\boldsymbol c_t'\boldsymbol c_t} = 2^{(h_t + 1)/2}$. The distributions that are $\lambda^*_N$-strong-faithful to $\mathcal{H}$ satisfy (\ref{TestSt}) for all $t = 1, \dots, T$.
For these distributions  the power of  the one-hyperedge test  is bounded below:
$$\Phi\left(\frac{|\hat{\gamma}_t|}{\sqrt{ \frac{1}{N}\boldsymbol c'_t \diag^{-1}(\hat{\boldsymbol p})\boldsymbol c_t}} - z_{1-\alpha/2}\right)  \geq \Phi\left(z_{1-\alpha/2}(\frac{\sqrt{\boldsymbol c_t'\boldsymbol c_t}N^{-1/2+\epsilon}}{\sqrt{ \frac{1}{N}\boldsymbol c'_t \diag^{-1}(\hat{\boldsymbol p})\boldsymbol c_t}} - 1)\right) \textrm{ for all } t\in\{1,\dots T\}$$
and approaches $1$ as $N \to \infty$. 
\end{proof}

Examples of  $\lambda^*_N$ computed for hyperedges of different sizes are shown in Table \ref{lambdasTest}.  In this paper, we do not investigate any multiple comparison issues arising with testing several one-hyperedge hypotheses at the same time.

\begin{table}
\centering
\caption{Possible threshold values for the parameter $\lambda$.}
\label{lambdasTest}
\vspace{3mm}

\begin{tabular}{l|c|l}
\hline
&  & \\
Hyperedge & \multicolumn{1}{c|}{The order of the odds ratio} & \multicolumn{1}{c}{$\lambda^*_N$} \\
&  &  \\
\hline
 & & \\
$[AB]$ & $h = 1$ & $\lambda^*_N = \frac{z_{1-\alpha/2}}{N^{1/4}} \cdot 2$\\
& & \\
$[ABC]$ & $h=2$ & $\lambda^*_N = \frac{z_{1-\alpha/2}}{N^{1/4}} \cdot  2^{3/2}$ \\
& & \\
$[ABCD]$ & $h=3$ & $\lambda^*_N = \frac{z_{1-\alpha/2}}{N^{1/4}}\cdot 4$\\
& & \\
$[ABCDE]$ & $h=4$ & $\lambda^*_N = \frac{z_{1-\alpha/2}}{N^{1/4}}\cdot  2^{5/2}$\\
\hline
\end{tabular}
\end{table}

For learning a hypergraph, any model selection procedure for hierarchical log-linear models can be applied. A review of such procedures can be found in \cite{EdwardsBook}. Backward selection which starts from the saturated model and, using the edge removal mechanism described by \cite {EdwardsNote}, goes through a sequence of nested hypergraphs, is a polynomial time algorithm that is appropriate for high dimensions. Uniform consistency of the maximum likelihood estimates for the maximal interactions of the distributions in $\mathcal{H}_{\lambda, \delta}$ entails that backward selection is a uniformly consistent procedure and the hypergraph will be determined correctly.

\section{Proportions of strong-unfaithful distributions}\label{SecProp}

As shown in the previous section, strong-faithfulness ensures the existence of uniformly consistent tests for developing methods for learning the underlying hypergraph. If the parameter space has finite volume it is possible to estimate the proportion (relative volume) of the distributions that are not $\lambda$-strong-faithful to a  model of interest and thus whose association structure may be discovered incorrectly. \cite{UhlerFaithGeometry} analyzed the proportion of distributions that are not $\lambda$-strong-faithful to a DAG in the Gaussian setting. Partial correlations define varieties and strong-unfaithful distributions correspond to the parameters that lie in a tube around these varieties. So the relative volume of these tubes corresponds to the proportion of distributions that don't satisfy the strong-faithfulness assumption, and lower bounds on these volumes were given for different classes of DAGs. The following example illustrates how one can estimate such volumes in the discrete case.

\vspace{3mm}

\textbf{Example \ref{22param}} (revisited):
Consider a hierarchical log-linear  parameterization of the distributions on the $2\times 2$ contingency table:
\begin{align}\label{22corner}
\operatorname{log} p_{00} &=\gamma^{\emptyset}, \nonumber \\ 
\operatorname{log} p_{01} &=\gamma^{\emptyset}+ \gamma^{B}_{1},\\
\operatorname{log} p_{10} &= \gamma^{\emptyset} + \gamma^{A}_{1}, \nonumber \\ 
\operatorname{log} p_{11} &= \gamma^{\emptyset} + \gamma^{A}_{1}+\gamma^{B}_{1}+ \gamma^{AB}_{11}. \nonumber
\end{align}
The interaction parameter, $\gamma_{11}^{AB}$, which was denoted by $\phi_1$ in Example \ref{22param} and in the corresponding Table \ref{StrFtwoway} and Figure \ref{2by2Prop}, can be expressed in terms of conditional probabilities $\theta_1 = \mathbb{P}(A = 0)$, $\theta_2 = \mathbb{P}(B=0\mid A=0)$, and $\theta_3 = \mathbb{P}(B=0\mid A=1)$:
\begin{equation*}
\gamma^{AB}_{11} =\mbox{log } \left(\frac{p_{00}p_{11}}{p_{01}p_{10}}\right)= \mbox{log } \frac{\theta_2(1-\theta_3)}{(1-\theta_2)\theta_3} = \mbox{log } \frac{\theta_2}{1-\theta_2} - \mbox{log } \frac{\theta_3}{1-\theta_3}.
\end{equation*}
Let 
\begin{eqnarray*}
\mathcal{H}_{\lambda} &=&\left \{(\theta_1, \theta_2, \theta_3) \in (0,1)^3: \,\, \left|\mbox{log } \frac{\theta_2}{1-\theta_2} - \mbox{log } \frac{\theta_3}{1-\theta_3}\right| > \lambda \right\}.
\end{eqnarray*}
The volume of its complement, $\bar{\mathcal{H}}_{\lambda}$, is equal to:
\begin{eqnarray}
\vol(\bar{\mathcal{H}}_{\lambda})  &=& \vol \left \{(\theta_1, \theta_2, \theta_3) \in (0,1)^3: \,\, e^{-\lambda} < \frac{\theta_2}{1-\theta_2} \cdot \frac{1-\theta_3}{\theta_3}  < e^{\lambda} \right \}\nonumber\\ 
&=& \int_{0}^1 d \theta_2 \left(  
\frac{\theta_2}{\theta_2(1-e^{-\lambda}) + e^{-\lambda}} - 
\frac{\theta_2}{\theta_2(1-e^{\lambda}) + e^{\lambda}}
\right) \nonumber\\
{}\nonumber\\
&=& \frac{e^{2\lambda} - 2\lambda e^{\lambda} - 1}{(1-e^{\lambda})^2},\label{eq_gamma2}
\end{eqnarray}
where the integral was computed by substitution. The parameter space $(0,1)^3$ has a unit volume. Hence, the relative proportion of distributions that are not $\lambda$-strong-faithful to $[AB]$ is equal to $\vol(\bar{\mathcal{H}}_{\lambda}) $. For small $\lambda$ this proportion is approximately $\frac{\lambda}{3}$, which is consistent with the simulation results for $\phi_1$ in Figure \ref{2by2Prop}.\qed

\vspace{3mm}

\begin{theorem}\label{BoundMax}
Let $\mathcal{H}$ be a hypergraph whose hyperedges $M_1, \dots, M_T$ are interactions of order $h_1, \dots, h_T$ respectively, and let 
$\bar{\mathcal{H}}_{\lambda}$ be the set of distributions that are not $\lambda$-strong-faithful to $\mathcal{H}$.  Then,
$$\vol(\bar{\mathcal{H}}_{\lambda}) \geq \underset{t \in \{1, \dots, T\}}{\operatorname{max}} \vol \{\boldsymbol p \in \mathcal{H}: \,\, |\gamma_{t}(\boldsymbol p)| < \lambda\} \geq \underset{t \in \{1, \dots, T\}}{\operatorname{max}}\left(\frac{e^{2\mu} - 2\mu e^{\mu} - 1}{(1-e^{\mu})^2}\right)^{2^{h_t-1}},$$
where $\mu=\lambda/2^{h_t-1}$ 
\end{theorem}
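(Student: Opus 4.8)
The first inequality follows immediately from set inclusion: if a distribution $\boldsymbol p \in \mathcal{H}$ has $|\gamma_t(\boldsymbol p)| < \lambda$ for a single fixed $t$, then $\operatorname{min}\{|\gamma_1(\boldsymbol p)|, \dots, |\gamma_T(\boldsymbol p)|\} < \lambda$ as well, so $\boldsymbol p \in \bar{\mathcal{H}}_{\lambda}$. Hence $\{\boldsymbol p \in \mathcal{H}: |\gamma_t(\boldsymbol p)| < \lambda\} \subseteq \bar{\mathcal{H}}_{\lambda}$ for every $t$, and since this holds for all $t$, the volume of $\bar{\mathcal{H}}_\lambda$ dominates the maximum over $t$ of the volumes of these single-hyperedge ``slabs.'' This step is routine and just requires being careful that all volumes are computed in the same ambient parameter space (the space of conditional probabilities used in the revisited Example~\ref{22param}, extended to the decomposable hypergraph setting).

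The second inequality is the substantive part, and the plan is to reduce it to the already-computed $2\times 2$ formula \eqref{eq_gamma2}. Fix $t$ and let $h = h_t$. The interaction $\gamma_t$ is, by Lemma~\ref{interactions}, the log conditional odds ratio of $M_t$ given the remaining variables, which (as noted in the paragraph preceding Theorem~\ref{powerTh}) is a $\pm 1$ contrast of $2^{h}$ cell probabilities against another disjoint $2^{h}$, i.e.\ it decomposes as an alternating sum of $2^{h-1}$ ``elementary'' log odds ratios, each living in its own $2\times 2$ subtable. Under a conditional-probability parameterization adapted to the hyperedge $M_t$ (of the same flavor as $\theta_1,\theta_2,\theta_3$ in the two-variable case, which exists because the hyperedges form a decomposable set), these $2^{h-1}$ elementary log odds ratios become \emph{variation independent} coordinates. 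The key geometric observation is then: if each of $2^{h-1}$ independent coordinates $\psi_1,\dots,\psi_{2^{h-1}}$ is confined to the slab $|\psi_j| < \mu$ with $\mu = \lambda/2^{h-1}$, then $|\sum_j \pm\psi_j| \le \sum_j |\psi_j| < 2^{h-1}\mu = \lambda$, so the product region $\{|\psi_j| < \mu \ \forall j\}$ is contained in $\{|\gamma_t| < \lambda\}$. Because the coordinates are variation independent, the volume of this product region factors, and each factor equals the volume $\vol(\bar{\mathcal{H}}_\mu)$ computed in \eqref{eq_gamma2} with $\lambda$ replaced by $\mu$; multiplying $2^{h-1}$ copies gives the stated lower bound $\left((e^{2\mu}-2\mu e^{\mu}-1)/(1-e^{\mu})^2\right)^{2^{h-1}}$. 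Taking the maximum over $t$ completes the argument.

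The main obstacle I anticipate is making the variation-independence and volume-factorization step fully rigorous for a general hyperedge of order $h$ sitting inside a decomposable hypergraph: one must exhibit the explicit reparameterization in which $\gamma_t$ is a sum of $2^{h-1}$ variation-independent log odds ratios with unit Jacobian (so that volumes in the new coordinates match volumes in the reference parameterization), and check that the remaining coordinates of $\boldsymbol p$ range over a set of volume $1$ independently of the $\psi_j$'s — this is exactly where decomposability of $\{M_1,\dots,M_T\}$ is used, paralleling the role of the marginal $\theta_1 = \mathbb{P}(A=0)$ in the $2\times 2$ computation, which integrated out to $1$. Everything else is bookkeeping: the triangle-inequality containment is elementary, and the one-dimensional integral has already been done in \eqref{eq_gamma2}.
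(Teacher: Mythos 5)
Your proposal is correct and follows essentially the same route as the paper: the first inequality is the same set-containment observation, and the second is obtained exactly as in the paper by writing $\gamma_t$ as an alternating sum of $2^{h_t-1}$ variation-independent elementary log odds ratios in a conditional-probability parameterization, applying the triangle inequality to contain the product region $\{|\psi_j|<\lambda/2^{h_t-1}\ \forall j\}$ inside $\{|\gamma_t|<\lambda\}$, and factoring its volume into $2^{h_t-1}$ copies of the $2\times 2$ integral (\ref{eq_gamma2}). The "obstacle" you flag (unit Jacobian and the remaining coordinates integrating to one) is in fact treated implicitly in the paper as well, which simply computes all volumes in the $(0,1)^m$ conditional-probability coordinates.
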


\begin{proof}

By Lemma \ref{interactions}, the parameter $\gamma_{t}$ for $t \in \{1, \dots, T\}$ is the log conditional odds ratio of $M_t$, given the variables in $\bar{M}_t$. We can express $\gamma_t$ using a corresponding set of variation independent conditional probabilities:
\begin{eqnarray*}
\gamma_{t} = \mbox{log } \left(\frac{\theta_{1}}{1-\theta_{1}} \cdots \frac{\theta_{2^{h_t-1}}}
{1-\theta_{2^{h_t-1}}}\cdot
\frac{1-\theta_{2^{h_t-1}+1}}{\theta_{2^{h_t-1}+1}} \cdots \frac{1-\theta_{2^{h_t}}}{ \theta_{2^{h_t}}}\right).\end{eqnarray*}
Therefore, 
\begin{eqnarray*}
&&\vol \{\boldsymbol p \in \mathcal{H}: \,\, |\gamma_{t}(\boldsymbol p)| < \lambda\} = \vol \left\{\boldsymbol \theta \in (0,1)^m: \,\, |\gamma_{t}| < \lambda \right\}\\
&& \\
&=&\vol\left\{(\theta_1, \dots, \theta_{2^{h_t}})\!\in\!(0,1)^{2^{h_t}}\!:\, \left| \mbox{log }\!\!\left(\!\frac{\theta_{1}}{1-\theta_{1}} \cdots \frac{\theta_{2^{h_t-1}}}
{1-\theta_{2^{h_t-1}}}\cdot
\frac{1-\theta_{2^{h_t-1}+1}}{\theta_{2^{h_t-1}+1}} \cdots \frac{1-\theta_{2^{h_t}}}{ \theta_{2^{h_t}}}\!\right)\!\right| < \lambda \right\} \\ 
&\geq&
\left(\vol\,\left\{(\zeta_1, \zeta_2)\in(0,1)^2:\,\, \left| \mbox{log } \frac{\zeta_1}{1-\zeta_1} - \mbox{log } \frac{\zeta_2}{1-\zeta_2}\right| < \frac{\lambda}{2^{h_t-1}}\right\} \right)^{2^{h_t-1}} \\
&=& \left(\frac{e^{2\mu} - 2\mu e^{\mu} - 1}{(1-e^{\mu})^2}\right)^{2^{h_t-1}},
\end{eqnarray*}

where $\mu = {\lambda}/{2^{h_t-1}}$, and for the last equation we used (\ref{eq_gamma2}). Since $\bar{\mathcal{H}}_{\lambda} = \{\boldsymbol p \in \mathcal{H}: \,\, |\gamma_t(\boldsymbol p)| < \lambda \mbox{ for at least one } t \in \{1, \dots, T\}\}$,
$$\vol(\bar{\mathcal{H}}_{\lambda}) \geq \vol \{\boldsymbol p \in \mathcal{H}: \,\, |\gamma_t(\boldsymbol p)| < \lambda \mbox{ for all } t \in \{1, \dots, T\}\},$$
and thus
$$\vol(\bar{\mathcal{H}}_{\lambda}) \geq \underset{t \in \{1, \dots, T\}}{\operatorname{max}} \left(\frac{e^{2\mu} - 2\mu e^{\mu} - 1}{(1-e^{\mu})^2}\right)^{2^{h_t-1}}, \quad \mbox{for } \mu = {\lambda}/{2^{h_t-1}}.$$
\end{proof}

\vspace{5mm}

As we will see in the following result, for hypergraphs whose hyperedges are variation independent we can in fact give an exact formulation of the proportion of distributions that don't satisfy strong-faithfulness.

\begin{theorem} \label{conjecture}
Let $\mathcal{H}$ be the hypergraph generated by marginals $M_1, \dots, M_T$. Assume that there exists a parameterization of $\mathcal{H}$ under which the interaction parameters corresponding to $M_1, \dots, M_T$ are variation independent and, further, that the parameter space has finite volume. Then, the proportion of distributions that are not $\lambda$-strong-faithful to $\mathcal{H}$ is
\begin{equation}\label{VolumeDecomp}
\vol(\bar{\mathcal{H}}_{\lambda}) = 1 - (1-\boldsymbol \nu_{M_1}) \cdots (1-\boldsymbol \nu_{M_T}),
\end{equation}
where $\boldsymbol \nu_{M_t}$, for $t = 1, \dots, T$, is the proportion of distributions that are not $\lambda$-strong-faithful to $M_t$.
\end{theorem}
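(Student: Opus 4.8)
The plan is to exhibit a Cartesian-product parameterization of $\mathcal H$ adapted to its hyperedges and then read off (\ref{VolumeDecomp}) as a statement about a product measure. By hypothesis there is a parameterization of $\mathcal H$ in which the maximal interaction parameters $\gamma_1,\dots,\gamma_T$ occur as coordinates, are variation independent, and the parameter space has finite volume. I would first argue that this yields an honest product $\Theta = \Theta_0 \times \Gamma_1 \times \cdots \times \Gamma_T$, where $\Gamma_t$ is the range of $\gamma_t$ and $\Theta_0$ collects the remaining (lower-order, ``nuisance'') coordinates, and in which each $\gamma_t$ depends only on its own factor $\Gamma_t$. For a hypergraph whose hyperedges form a decomposable set this product structure is exactly the one supplied by the conditional-probability (marginal log-linear) parameterization used in Section~\ref{SecProp}; the model $A \ci B \mid C$ of Example~\ref{222marginal} is the prototype, with $\Theta_0$ carrying $\mathbb P(C)$ and each interaction attached to a hyperedge being a function of the conditional distribution of that hyperedge's ``new'' variable.

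Given such a parameterization, the set of $\lambda$-strong-faithful distributions is
$$\mathcal H_\lambda = \{\boldsymbol\eta \in \Theta : \min_t |\gamma_t| > \lambda\} = \Theta_0 \times E_1 \times \cdots \times E_T, \qquad E_t := \{\gamma \in \Gamma_t : |\gamma| > \lambda\},$$
and each $E_t$ is measurable, being the preimage of an open set under a continuous map. Since the uniform measure on the product $\Theta$ makes the coordinate blocks independent, the events $\{|\gamma_t| > \lambda\}$ are independent, and by Tonelli
$$\frac{\vol(\mathcal H_\lambda)}{\vol(\mathcal H)} = \frac{\vol(\Theta_0)\prod_{t=1}^T \vol(E_t)}{\vol(\Theta_0)\prod_{t=1}^T \vol(\Gamma_t)} = \prod_{t=1}^T \frac{\vol(E_t)}{\vol(\Gamma_t)}.$$
By Lemma~\ref{interactions}, $\gamma_t$ is the log conditional odds ratio associated with $M_t$, so $\vol(E_t)/\vol(\Gamma_t)$ is precisely the relative volume of the $\gamma_t$-parameter space on which $|\gamma_t| > \lambda$, i.e.\ the proportion of distributions that \emph{are} $\lambda$-strong-faithful to the one-hyperedge hypergraph $M_t$; hence it equals $1 - \boldsymbol\nu_{M_t}$. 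Passing to the complement inside $\mathcal H$ and normalizing by $\vol(\mathcal H)$ then gives the identity (\ref{VolumeDecomp}).

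The measure-theoretic bookkeeping is routine; the step that needs care is the first one, namely deducing the genuine product structure $\Theta = \Theta_0 \times \Gamma_1 \times \cdots \times \Gamma_T$ with $\gamma_t$ a function of $\Gamma_t$ alone from the bare hypothesis that $\gamma_1,\dots,\gamma_T$ are variation independent. Variation independence of the $\gamma_t$'s among themselves is not by itself enough: one also needs that no $\gamma_s$ depends on another block $\Gamma_t$ and that the nuisance coordinates $\Theta_0$ are variation independent of all the $\gamma_t$'s, since otherwise the cylinders $E_t$ fail to separate across the product and the volume does not factor. I would therefore prove the theorem for hyperedge sets for which such a parameterization is known to exist --- the decomposable case being the principal one, via \citep{RudasBergsma} --- and note that without such a parameterization only the inequality of Theorem~\ref{BoundMax} remains available.
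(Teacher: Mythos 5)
Your proposal is correct and follows essentially the same route as the paper, whose entire proof is the one-line observation that variation independence makes the joint range of the interaction parameters a Cartesian product, so the relative volumes multiply. The extra care you take in demanding a genuine product structure $\Theta_0 \times \Gamma_1 \times \cdots \times \Gamma_T$ --- with each $\gamma_t$ a function of its own block and the nuisance coordinates variation independent of all the $\gamma_t$'s, without which the events $\{|\gamma_t|>\lambda\}$ need not be independent under the uniform measure --- addresses a point the paper leaves implicit; consistently with your closing remark, the paper itself only instantiates the theorem for decomposable (and ordered decomposable marginal) log-linear models, where the conditional-probability parameterization supplies exactly that structure.
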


\begin{proof}
Consider a parameterization of $\mathcal{H}$ under which the maximal interaction parameters corresponding to $M_1, \dots, M_T$ are variation independent. For $t \in \{1, \dots, T\}$, let $\boldsymbol \nu_{M_t}$ denote the proportion of distributions that do not satisfy $\lambda$-strong-faithfulness with respect to $M_t$. Since the joint range of the variation independent parameters is equal to the Cartesian product of the individual ranges,  the proportion of distributions that are not $\lambda$-strong-faithful to $\mathcal{H}$ is equal to
$1 - (1-\boldsymbol \nu_{M_1}) \cdots (1-\boldsymbol \nu_{M_T}).$
\end{proof}

The maximal interaction parameters in decomposable log-linear models  \citep{Haberman} and in ordered decomposable marginal log-linear models \citep{RudasBergsma} are variation independent, and Theorem \ref{conjecture} applies. Let $\mathcal{H}$ be the hypergraph generated by a decomposable sequence of marginals $M_1, \dots, M_T$.  The interaction parameter associated with a hyperedge $M_t$ can be expressed using a variation independent set of conditional probabilities. Under this parameterization, the proportion of distributions that do not satisfy $\lambda$-strong-faithfulness with respect to the hyperedge $M_t$ is equal to 
\begin{eqnarray}\label{nu}
\boldsymbol \nu_{h_t} = \vol\!\left\{\!(\theta_1, \dots, \theta_{2^{h_t}})\!:\, \left|\log \!\left(\frac{\theta_{1}}{1-\theta_{1}} \cdots \frac{\theta_{2^{h_t-1}}}
{1-\theta_{2^{h_t-1}}}\cdot
\frac{1-\theta_{2^{h_t-1}+1}}{\theta_{2^{h_t-1}+1}} \cdots \frac{1-\theta_{2^{h_t}}}{ \theta_{2^{h_t}}}\right)\!\right| < \lambda \right\},
\end{eqnarray}
where $h_t$ denotes the order of interaction $M_t$. The proportion of distributions that are not $\lambda$-strong-faithful to $\mathcal{H}$ is calculated using (\ref{VolumeDecomp}).

Figure \ref{Chains} shows values of $\boldsymbol \nu_{h}$ as functions of  $h$ and $\lambda$. The concrete computations involved in Theorem \ref{conjecture} are illustrated in Example \ref{ChainExample}. We next analyze hypergraphs with a special ``chain'' structure and show that in this case Equation (\ref{VolumeDecomp}) simplifies.

\begin{definition}
A hypergraph $\mathcal{H}$ is called a chain of order $h$ if the generating sequence of marginals 
$\{M_1, \dots M_T\}$, where $\cup_{t = 1}^T M_t = V$, is decomposable and all of the hyperedges correspond to $h$-th order interactions of the joint distribution. 
\end{definition}

For example, the hypergraph generated by $\{A,B\}$,  $\{B,C\}$, $\{C,D\}$, $\{D,E\}$ is a chain of order $1$ of length $4$, and the hypergraph generated by $\{A,B,C\}$ and $\{A,B,D\}$ is a chain of order 2 of length 2.

\begin{corollary} \label{ChainH}
Let a hypergraph $\mathcal{H}$ be a chain of order $h$ of length $L$. Then the proportion of distributions that are not $\lambda$-strong-faithful to $\mathcal{H}$ is equal to
$$1 - (1-\boldsymbol \nu_h)^L,$$
where 
\begin{eqnarray}\label{nuH}
\boldsymbol \nu_{h} = \vol\left\{(\theta_1, \dots, \theta_{2^{h}}):\,\, \left|\log \left(\frac{\theta_{1}}{1-\theta_{1}} \cdots \frac{\theta_{2^{h-1}}}
{1-\theta_{2^{h-1}}}\cdot
\frac{1-\theta_{2^{h-1}+1}}{\theta_{2^{h-1}+1}} \cdots \frac{1-\theta_{2^{h}}}{ \theta_{2^{h}}}\right)\right| < \lambda \right\}.
\end{eqnarray}
\end{corollary}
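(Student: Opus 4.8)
The plan is to obtain Corollary~\ref{ChainH} as an essentially immediate specialization of Theorem~\ref{conjecture}, once we verify that the hypotheses of that theorem hold for a chain hypergraph. First I would invoke the fact, cited just before the statement, that for a decomposable sequence of generating marginals $M_1, \dots, M_T$ the maximal interaction parameters are variation independent \citep{Haberman, RudasBergsma}; since by definition a chain of order $h$ has a decomposable generating sequence with $\cup_{t=1}^T M_t = V$, this gives a parameterization of $\mathcal{H}$ in which $\gamma_1, \dots, \gamma_T$ are variation independent. I would also note that under the conditional-probability parameterization used in Equation~(\ref{nu}), the relevant parameter space is a finite-dimensional cube and hence has finite volume, so Theorem~\ref{conjecture} applies verbatim.

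Next I would apply Theorem~\ref{conjecture} to conclude $\vol(\bar{\mathcal{H}}_\lambda) = 1 - (1-\boldsymbol\nu_{M_1}) \cdots (1-\boldsymbol\nu_{M_T})$, where $\boldsymbol\nu_{M_t}$ is the proportion of distributions not $\lambda$-strong-faithful to $M_t$. The key extra observation specific to a chain is that every hyperedge $M_t$ corresponds to an interaction of the \emph{same} order $h$, so by Lemma~\ref{interactions} each $\gamma_t$ is a log conditional odds ratio of a $2^h$-cell configuration, and the volume $\boldsymbol\nu_{M_t}$ depends only on $h$ and $\lambda$, not on $t$; this common value is exactly the quantity $\boldsymbol\nu_h$ given in Equation~(\ref{nuH}), obtained by expressing $\gamma_t$ through a variation independent set of conditional probabilities as in Equation~(\ref{nu}). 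With $T = L$ hyperedges all contributing the same factor, the product collapses to $(1-\boldsymbol\nu_h)^L$, yielding $\vol(\bar{\mathcal{H}}_\lambda) = 1 - (1-\boldsymbol\nu_h)^L$.

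The only genuine content beyond quoting Theorem~\ref{conjecture} is the claim that $\boldsymbol\nu_{M_t}$ really is independent of $t$ and equals $\boldsymbol\nu_h$ as written; this is where I would be slightly careful, checking that the conditional-odds-ratio expression for $\gamma_t$ has precisely $2^{h-1}$ factors of the form $\theta/(1-\theta)$ and $2^{h-1}$ of the form $(1-\theta)/\theta$, matching the integrand in (\ref{nuH}), and that the conditioning variables in $\bar{M}_t$ contribute nothing to the volume because the corresponding marginal parameters are variation independent of $\gamma_t$ and integrate out against a unit-volume factor. None of this requires new computation—the explicit evaluation of the two-dimensional base case was already done in the revisited Example~\ref{22param}—so the main (minor) obstacle is purely bookkeeping: confirming that the chain structure forces all $T = L$ factors to coincide and that decomposability is exactly the hypothesis needed to import the variation-independence input of Theorem~\ref{conjecture}.
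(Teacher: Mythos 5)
Your proposal is correct and matches the paper's (implicit) argument: the paper derives the corollary directly from Theorem~\ref{conjecture} together with the observation that decomposability of the generating sequence yields variation independence of the $\gamma_t$'s, and that for a chain all $T=L$ hyperedges have the same order $h$, so every factor $\boldsymbol\nu_{M_t}$ equals the common value $\boldsymbol\nu_h$ of Equation~(\ref{nu}). Your extra bookkeeping on the form of the conditional-odds-ratio expression and on the irrelevance of the remaining variation independent parameters is consistent with the paper's worked instance in Example~\ref{ChainExample}.
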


For a chain of order $1$, the proportion of distributions that are not $\lambda$-strong-faithful to $\mathcal{H}$ is especially simple:
$$\vol(\bar{\mathcal{H}}_{\lambda}) = 1 - (1-\boldsymbol \nu_1)^T,$$
where 
\begin{eqnarray}\label{nu1}
\boldsymbol \nu_1 = \frac{e^{2\lambda} - 2\lambda e^{\lambda} - 1}{(1-e^{\lambda})^2}.
\end{eqnarray}

The proportions for chains of several orders were estimated using Monte-Carlo method and are displayed in Figure \ref{ChainsProportions}.

\begin{figure}[!t]
\begin{minipage}[t]{0.43\textwidth}
\vspace{0pt}
\centering
\includegraphics[width=\linewidth]{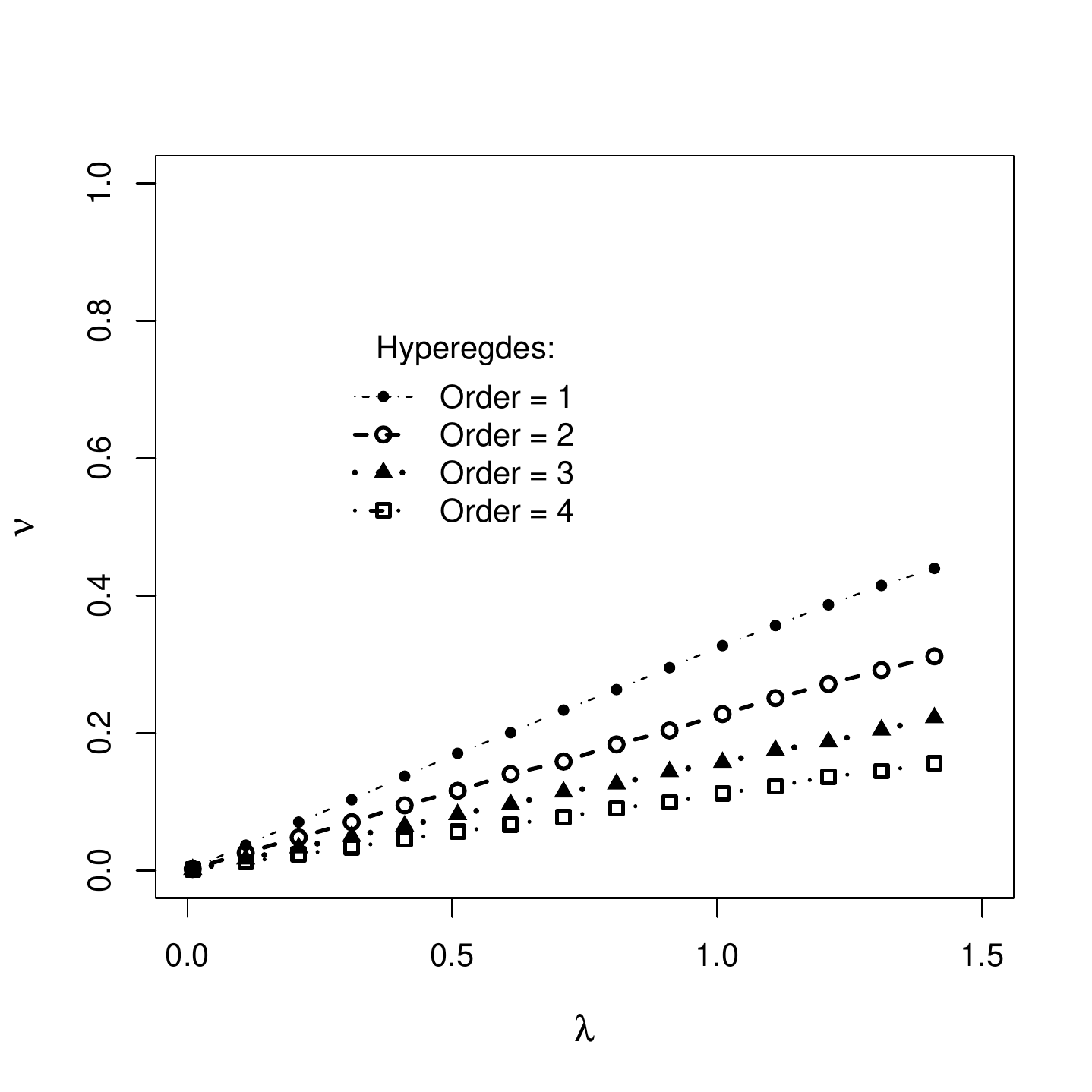}
\caption{Proportions of distributions that do not satisfy strong-faithfulness with respect to a single hyperedge. See Equation (\ref{nu}). }
\label{Chains}
\end{minipage}
\hfill
\begin{minipage}[t]{0.43\textwidth}
 \vspace{0pt}
\centering
\includegraphics[width=\linewidth]{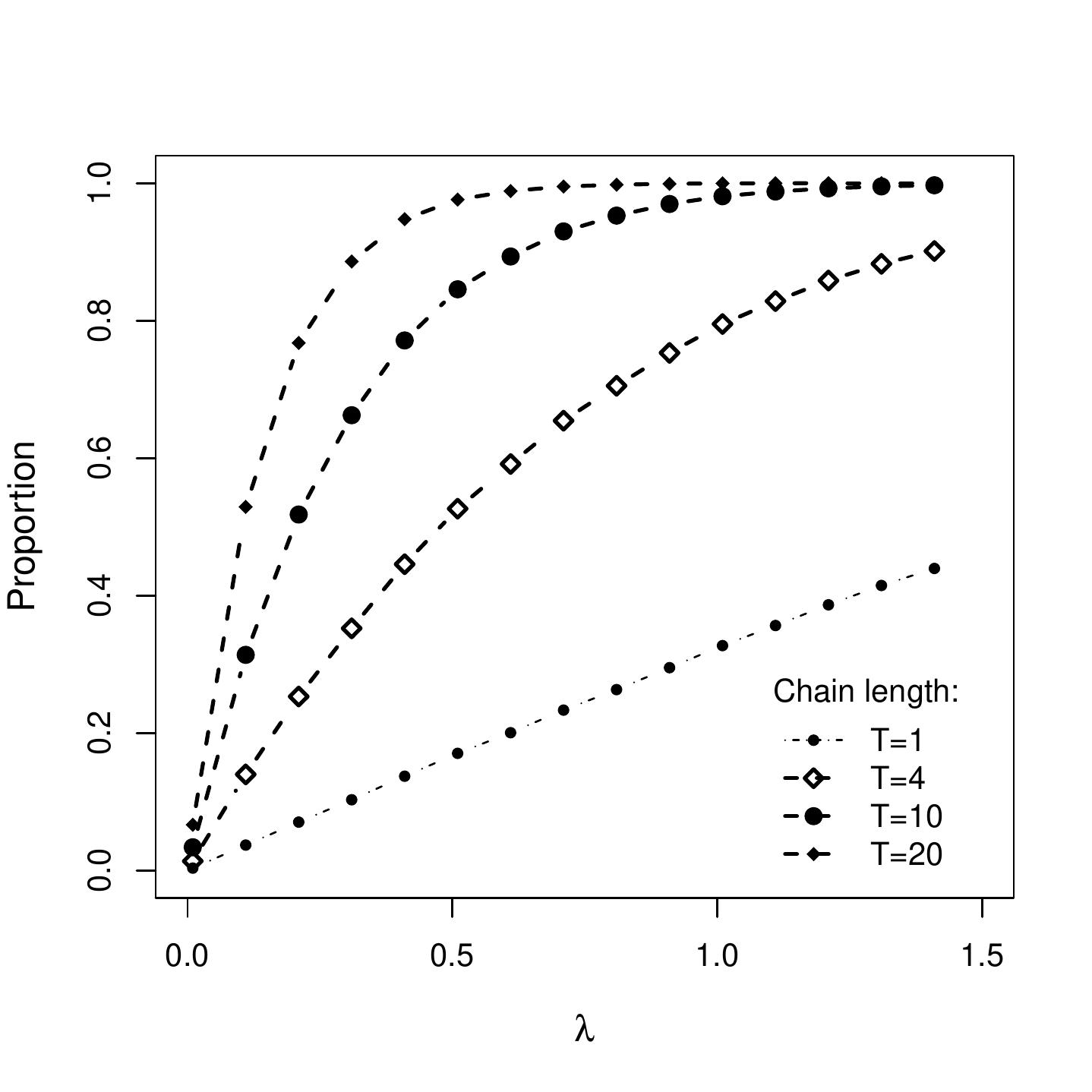}
\caption{Proportions of distributions that are not $\lambda$-strong-faithful to a first order chain. See Corollary \ref{ChainH}.}
\label{ChainsProportions}
\end{minipage}
\end{figure}

\begin{example} \label{ChainExample}

We demonstrate the volume computation using the chain $[AB][BC][CD]$ of order 1.
The maximal interaction parameters corresponding to the hyperedges are: 
\begin{eqnarray*}
\gamma_1 &=&  \mbox{log } \mathcal{COR}(AB \mid CD) = \mbox{log } \frac{p_{00kl}p_{11kl}}{p_{01kl}p_{10kl}}, \\
\gamma_2 &=& \mbox{log } \mathcal{COR}(BC \mid AD) = \mbox{log } \frac{p_{i00l}p_{i11l}}{p_{i01l}p_{i10l}}, \\ 
\gamma_3 &=&  \mbox{log } \mathcal{COR}(CD \mid AB) = \mbox{log } \frac{p_{ij00}p_{ij11}}{p_{ij01}p_{ij10}},
\end{eqnarray*}
where $i, j, k, l \in \{0, 1\}$ are fixed categories of $A$, $B$, $C$, $D$ respectively. The chain can be described by two conditional independence relations:  $A \ci C \mid B$ and $AB \ci D \mid C$. Thus the distributions in a chain model can be parameterized using the conditional probabilities:
\begin{align*}
&\theta_{0} = \mathbb{P}(B = 0), \\
&\theta_{10} = \mathbb{P}(A=0 \mid B = 0), \quad \theta_{11} = \mathbb{P}(A = 0 \mid B = 1),  \\
&\theta_{20} = \mathbb{P}(C = 0 \mid B = 0), \quad \theta_{21} = \mathbb{P}(C = 0 \mid B = 1),  \\
&\theta_{30} = \mathbb{P}(D = 0 \mid C = 0), \quad \theta_{31} = \mathbb{P}(D = 0 \mid C = 1).
\end{align*}
The parameters $\boldsymbol \theta$ are variation independent, and, for $t = 1, 2, 3$,
$$\gamma_t = \mbox{log } \left(\frac{\theta_{t0}}{1-\theta_{t0}} \cdot \frac{1-\theta_{t1}}{\theta_{t1}}\right).$$
Let 
\begin{eqnarray*}
\boldsymbol \nu_1  &=& \vol\left \{(\theta_1, \theta_2) \in (0,1)^2: \,\,  \left|\mbox{log } \frac{\theta_1}{1-\theta_1}  - \mbox{log } \frac{\theta_2}{1-\theta_2} \right| < \lambda \right\} =\frac{e^{2\lambda} - 2\lambda e^{\lambda} - 1}{(1-e^{\lambda})^2}.
\end{eqnarray*} 
Using the binomial formula, we obtain that
$\vol(\bar{\mathcal{H}}_{\lambda}) = 1 - (1 - \boldsymbol \nu_1)^3$. \qed
\end{example}


\vspace{5mm}

\textbf{Example \ref{FourVarUnfaith}} (revisited):
The maximal non-vanishing interactions of a distribution that is faithful to a hypergraph $\mathcal{H}$ with hyperedges $\{A,B,C\}$ and $\{A,B,D\}$ can be described using the interaction parameters equal to the logarithm of the second order conditional odds ratios of $ABC$ given $D$ and of $ABD$, given $C$: 
\begin{eqnarray*}
\gamma^{ABC}_0 &=& \mbox{log } \mathcal{COR}(ABC\mid D=0),\\
\gamma^{ABC}_1 &=& \mbox{log } \mathcal{COR}(ABC\mid D=1),\\
\gamma^{ABD}_0 &=& \mbox{log } \mathcal{COR}(ABD\mid C=0),\\
\gamma^{ABD}_1 &=& \mbox{log } \mathcal{COR}(ABD\mid C=1).
\end{eqnarray*}
Using conditional probabilities,
\begin{eqnarray*}
\theta_1 &=& \mathbb{P}(C=0\mid A=0, B=0), \quad \theta_2 = \mathbb{P}(C=0\mid A=0, B=1), \\
\theta_3 &=& \mathbb{P}(C=0\mid A=1, B=0), \quad \theta_4 = \mathbb{P}(C=0\mid A=1, B=1), \\
\theta_5 &=& \mathbb{P}(D=0\mid A=0, B=0), \quad \theta_6 = \mathbb{P}(D=0\mid A=0, B=1), \\
\theta_7 &=& \mathbb{P}(D=0\mid A=1, B=0), \quad \theta_8 = \mathbb{P}(D=0\mid A=1, B=1),
\end{eqnarray*} 
one obtains
\begin{eqnarray*}
\gamma^{ABC}_0 &=& \gamma^{ABC}_1 = \mbox{log } \frac{\theta_1}{1-\theta_1} + \mbox{log } \frac{\theta_4}{1-\theta_4} - \mbox{log } \frac{\theta_2}{1-\theta_2} - \mbox{log } \frac{\theta_3}{1-\theta_3}, \\
\gamma^{ABD}_0 &=& \gamma^{ABD}_1 = \mbox{log } \frac{\theta_5}{1-\theta_5} + \mbox{log } \frac{\theta_8}{1-\theta_8} - \mbox{log } \frac{\theta_6}{1-\theta_6} - \mbox{log } \frac{\theta_7}{1-\theta_7}.
\end{eqnarray*}

A distribution is not $\lambda$-strong-faithful to the hypergraph $\mathcal{H}$ if at least one of the following inequalities holds:
$$|\gamma^{ABC}_0| < \lambda, \mbox{ or } |\gamma^{ABD}_0| < \lambda.$$
Hence, the proportion of distributions that are not $\lambda$-strong-faithful to the hypergraph $\mathcal{H}$ is equal to $1 - (1 - \boldsymbol \nu_2)^2$,
where \begin{eqnarray*}
\boldsymbol \nu_2  &=& \vol\left \{(\theta_1, \dots, \theta_4) \in (0,1)^4: \, \left|\mbox{log } \frac{\theta_1}{1-\theta_1} + \mbox{log } \frac{\theta_4}{1-\theta_4} - \mbox{log } \frac{\theta_2}{1-\theta_2} - \mbox{log } \frac{\theta_3}{1-\theta_3} \right| < \lambda \right\}.
\end{eqnarray*} 

We were not able to find a closed-form expression for $\boldsymbol \nu_2$. It can be shown that $\boldsymbol \nu_2$ is bounded above by $\boldsymbol \nu_1$ and thus the volume of distributions that are not $\lambda$-strong-faithful to the hypergraph $\mathcal{H}$ is bounded above by the volume computed for the chain of the same length of order $1$, that is,
$\vol(\bar{\mathcal{H}}_{\lambda}) \leq 1 - (1-\boldsymbol \nu_{1})^2.$
 \qed

\begin{remark}
The concept of strong-faithfulness can be extended to distributions that do not belong to a given hypergraph model. Let $\boldsymbol p, \boldsymbol q \in \mathcal{P}$, and let $\rho$ be a divergence function. The distance from a distribution $\boldsymbol p$ to a hypergraph $\mathcal{H}$ can be defined as 
\begin{equation}\label{distModel}
\rho(\boldsymbol p, \mathcal{H}) = \underset{\boldsymbol q \in \mathcal{H}}{\mbox{min }} \rho(\boldsymbol p, \boldsymbol q).
\end{equation}
In particular, $\rho(\boldsymbol p, \mathcal{H}) = 0$ if and only if $\boldsymbol p \in \mathcal{H}$. We denote by $\boldsymbol p_{\mathcal{H,\rho}}$ the projection of $\boldsymbol p$ onto the hypergraph $\mathcal{H}$, i.e.:
$$\boldsymbol p_{\mathcal{H}, \rho} = \underset{\boldsymbol q \in \mathcal{H}}{\mbox{argmin }} \rho(\boldsymbol p, \boldsymbol q),$$
and call a distribution $\boldsymbol p$ \emph{projected-$\lambda$-strong-faithful to} $\mathcal{H}$ with respect to $\rho$ (for $\lambda >0$) if  $\boldsymbol p_{\mathcal H, \rho}$ is $\lambda$-strong-faithful to $\mathcal{H}$. The concept of projected-strong-faithfulness is relevant in various estimation procedures.
\end{remark}

We end by illustrating the concept of projected-strong-faithfulness by estimating the proportions of projected-$\lambda$-strong-faithful distributions for several hypergraph models on the $2 \times 2 \times 2$ contingency table.

\begin{figure}[!t]
\centering
\includegraphics[scale=0.45]{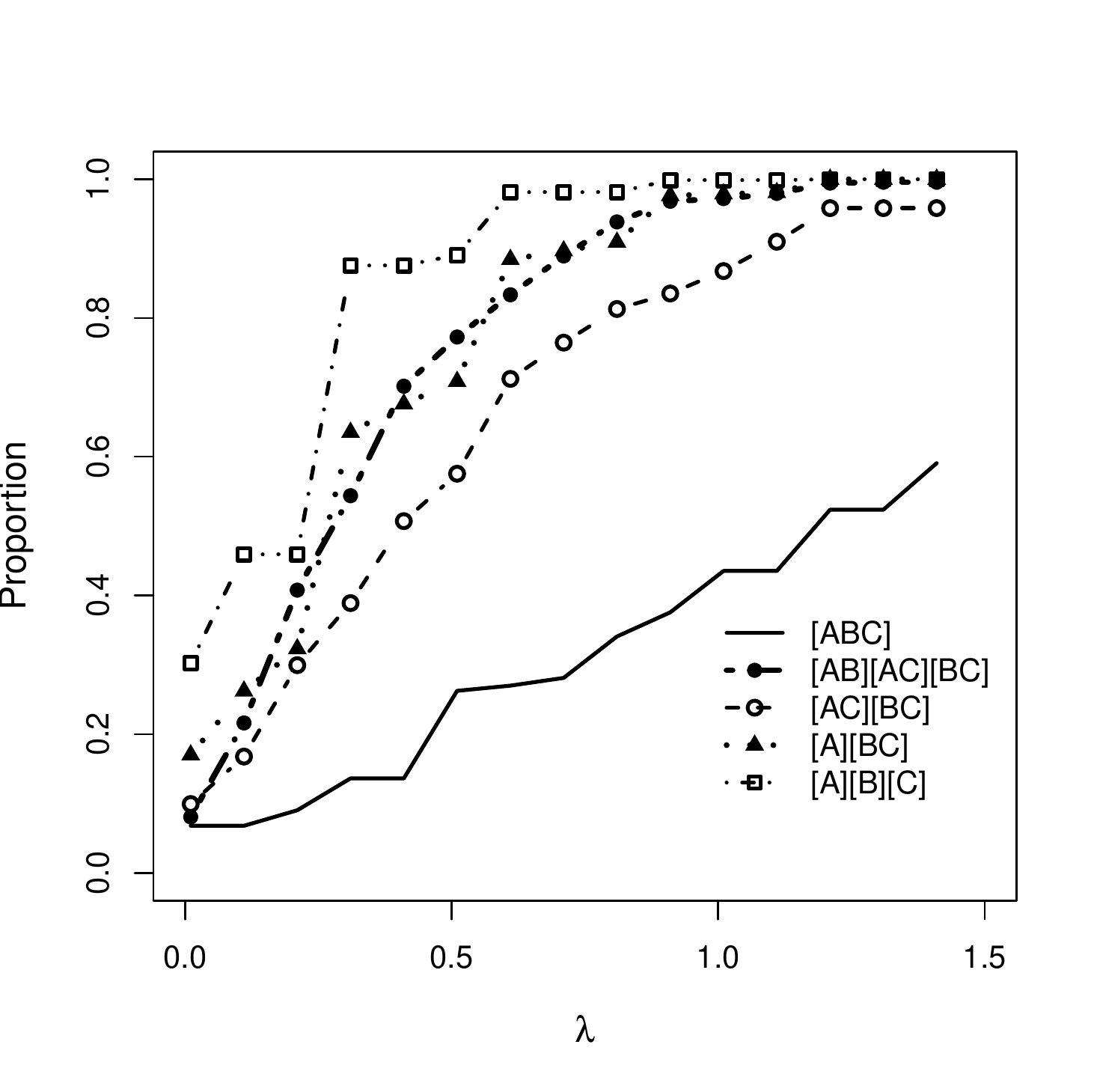}
\vspace{-0.5cm}\caption{Proportions of distributions that are not projected-$\lambda$-strong-faithful computed for several hypergraphs on the $2 \times 2 \times 2$ contingency table.}
\label{C1}
\end{figure}

\vspace{4mm}

\textbf{Example \ref{222marginal}} (revisited): 
To determine the distance from a distribution $\boldsymbol p$ to a hypergraph $\mathcal{H}$ we use the likelihood function under the corresponding log-linear model. Relative frequencies of distributions that do not satisfy the projected-$\lambda$-strong-faithfulness relation for different hypergraphs and different values of $\lambda$ are displayed in Figure \ref{C1}.  
\qed

\section{Conclusion}\label{SecConcl}

We demonstrated that the association structure of discrete data can be very complex, and some distributions are not faithful to any undirected graphical model or any DAG. Thus, the attractive simplicity of graphical models may be misleading. In Section \ref{sectionGraphFaith}, we proposed the concept of parametric faithfulness, which can be applied to any exponential family, including those  which cannot be specified using Markov properties. 

We considered the class of hypergraphs which can be identified with hierarchical log-linear models. We showed that for any distribution there exists a hypergraph to which it is parametrically faithful and suggested to conduct the search in this class. As the class also  contains graphical models, if a model structure which can be described by a graph is appropriate, it will be discovered (see the consistency result in Section \ref{sectionStrongFaith}).  

Our work is relevant for the popular causal search algorithms, referred to in Section \ref{intro}, which assume (strong-) faithfulness.  The findings described in Sections \ref{sectionStrongFaith} and \ref{SecProp} imply that, depending on the quantitative expression for association and on the choice of the cut-off parameter $\lambda$, to define strong-faithfulness, the resulting model selection procedures may yield different results for the same data.

\section*{Acknowledgements}

The authors wish to thank Antonio Forcina and Ren{\'a}ta N{\'e}meth for helpful comments and discussions. The third author was supported in part by Grant K-106154 from the Hungarian National Scientific Research Fund (OTKA).

\bibliographystyle{plainnat}
\bibliography{uwthesis0507faith}

\begin{table}
\centering
\label{allgraphsABC}
\caption{Some graphical models on three nodes.}
\vspace{6mm}

\begin{tabular}{m{7mm}m{32mm}m{65mm}m{35mm}}
\hline 
&  &  & \\
&  \multicolumn{1}{c}{Log-Linear Model} &  \multicolumn{1}{c}{Conditional Independence} &  \multicolumn{1}{c}{Graph}  \\ [2ex]
\hline
1 & \multicolumn{1}{c}{[ABC]} &  \multicolumn{1}{c}{None} &  \multicolumn{1}{c}{\begin{minipage}{.2\textwidth}\includegraphics[scale=0.5]{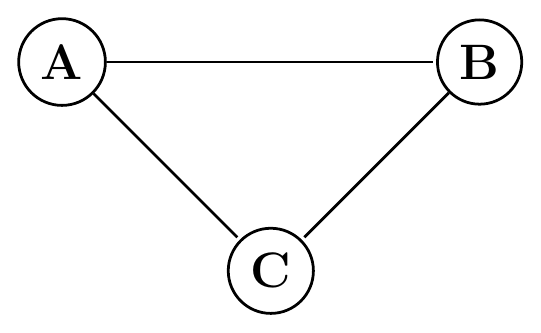}\end{minipage}}
 \\ [4ex]
\hline 
2 &  \multicolumn{1}{c}{[A][B]} &  \multicolumn{1}{c}{$A \ci B$} &  \multicolumn{1}{c}{\begin{minipage}{.2\textwidth}\includegraphics[scale=0.5]{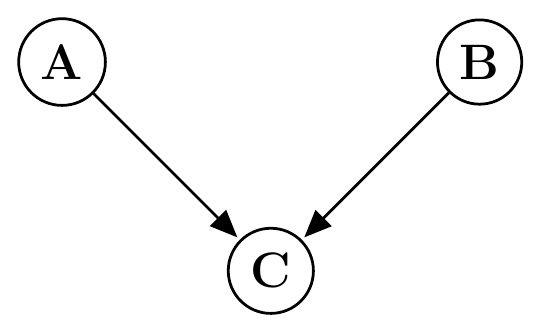}\end{minipage}}  \\ [4ex]
\hline
3  &  \multicolumn{1}{c}{[A][C]} &   \multicolumn{1}{c}{$A \ci C$} &  \multicolumn{1}{c}{\begin{minipage}{.2\textwidth}\includegraphics[scale=0.5]{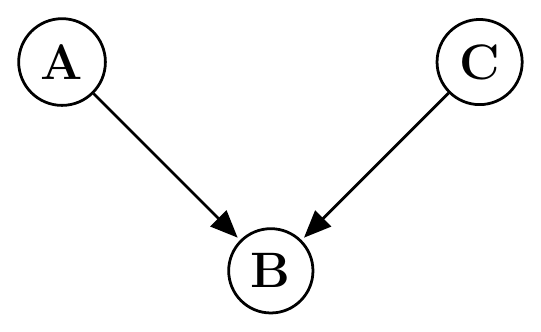}\end{minipage}} \\ [4ex]
\hline
4 &  \multicolumn{1}{c}{[B][C]} &  \multicolumn{1}{c}{$B \ci C$} &  \multicolumn{1}{c}{\begin{minipage}{.2\textwidth}\includegraphics[scale=0.5]{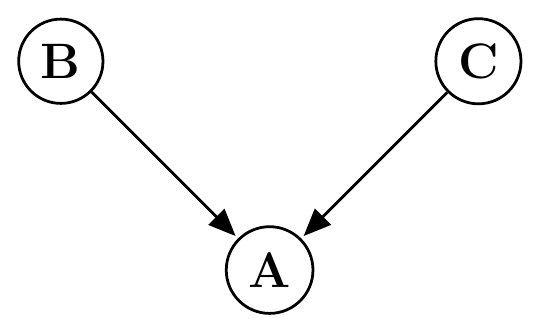}\end{minipage}} \\ [4ex]
\hline
5 &  \multicolumn{1}{c}{[AB][C]} &  \multicolumn{1}{c}{$AB \ci C$} &  \multicolumn{1}{c}{\begin{minipage}{.2\textwidth}\includegraphics[scale=0.5]{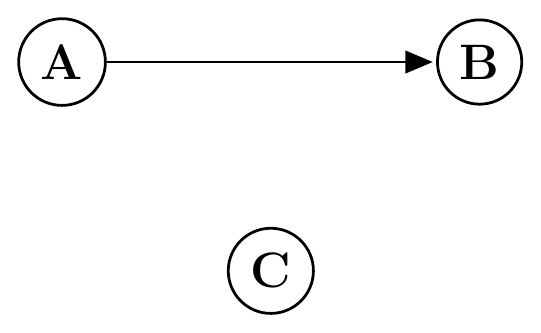}\end{minipage}}  \\ [4ex]
\hline
6 &  \multicolumn{1}{c}{[AC][B]} &  \multicolumn{1}{c}{$AC \ci B$} &  \multicolumn{1}{c}{\begin{minipage}{.2\textwidth}\includegraphics[scale=0.5]{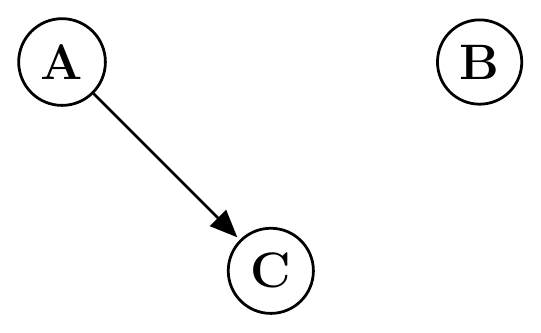}\end{minipage}}  \\ [4ex]
\hline
7 &  \multicolumn{1}{c}{[A][BC]} &  \multicolumn{1}{c}{$A \ci BC$}  &  \multicolumn{1}{c}{\begin{minipage}{.2\textwidth}\includegraphics[scale=0.5]{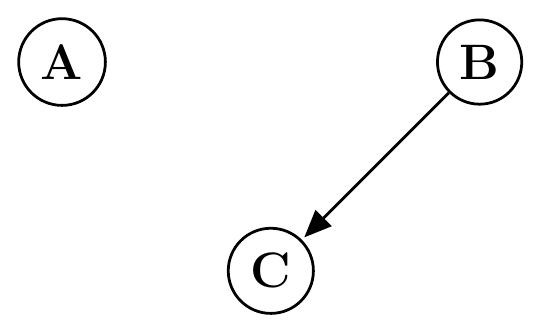}\end{minipage}} \\ [4ex]
\hline
8 &  \multicolumn{1}{c}{[A][B][C]} &  \multicolumn{1}{c}{\begin{tabular}{lll} $A \ci B$, & $A \ci C$, & $B \ci C$, \\
 $A \ci B | C$, & $A \ci C | B$, & $B \ci C | A$ \end{tabular}} &  \multicolumn{1}{c}{\begin{minipage}{.2\textwidth}\includegraphics[scale=0.5]{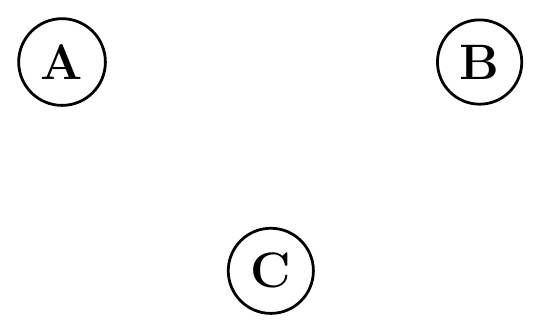}\end{minipage}} \\ [4ex]
\hline
\end{tabular}
\end{table}



\end{document}